\pgfplotsset{compat=1.18}
\newcommand{\Z}{{\mathbb Z}}
\newcommand{\C}{{\mathbb C}}
\newcommand{\cC}{{\mathcal C}}
\newcommand{\cD}{{\mathcal D}}
\newcommand{\cE}{{\mathcal E}}
\newcommand{\cK}{{\mathcal K}}
\newcommand{\cQ}{{\mathcal Q}}
\newcommand{\cS}{{\mathcal S}}
\newcommand{\cV}{{\mathcal V}}
\newcommand{\Query}{\mathsf{Query}}
\newcommand{\Answer}{\mathsf{Answer}}
\newcommand{\Enc}{\mathtt{Enc}}
\newcommand{\Dec}{\mathtt{Dec}}
\newcommand{\Gen}{\mathtt{KeyGen}}
\newcommand{\Eval}{\mathtt{Eval}}
\newcommand{\ctxt}{\mathtt{ctxt}}
\newcommand{\Match}{\mathsf{Match}}
\newcommand{\Mask}{\mathsf{Mask}}
\newcommand{\Comp}{\mathsf{Comp}}
\newcommand{\Decomp}{\mathsf{Decomp}}
\newcommand{\DB}{\mathsf{DB}}
\newcommand*{\comp}{\mathrm{c}}
\newcommand*{\compind}{\ensuremath{\mathrel{\overset{\comp}{\equiv}}}}
\newtheorem{definition}{Definition}
\newtheorem{theorem}{Theorem}
\newtheorem{lemma}{Lemma}
\newtheorem{remark}{Remark}
\begin{document}

\title{SIMD-Aware Homomorphic Compression and Application to Private Database Query}

\author{Jung Hee Cheon~\IEEEmembership{Member,~IEEE,} Keewoo Lee, Jai Hyun Park, Yongdong Yeo
\thanks{This work has been submitted to the IEEE for possible publication. Copyright may be transferred without notice, after which this version may no longer be accessible.}
\thanks{Jung Hee Cheon, Jai Hyun Park, and Yongdong Yeo are with Seoul National University, Seoul 08826, South Korea. E-mail: \{jhcheon, jhyunp, yongdong\}@snu.ac.kr.}%
\thanks{Jung Hee Cheon is also with CryptoLab Inc., Seoul 08826, South Korea.}
\thanks{Keewoo Lee is with University of California, Berkeley, CA 94709, USA. E-mail: {keewoo.lee}@berkeley.edu.}
\thanks{
This work was supported by Institute of Information \& Communications Technology Planning \& Evaluation (IITP) grant funded by the Korea government (MSIT) (No. 2020-0-00840). Most of the work was done while Keewoo Lee was at CryptoLab Inc.}
\thanks{
(Corresponding author: Keewoo Lee)
}
}

\maketitle

\begin{abstract}
In a private database query scheme (PDQ), a server maintains a database, and users send queries to retrieve records of interest from the server while keeping their queries private. A crucial step in PDQ protocols based on homomorphic encryption is homomorphic compression, which compresses encrypted sparse vectors consisting of query results. In this work, we propose a new homomorphic compression scheme with PDQ as its main application. Unlike existing approaches, our scheme (i) can be efficiently implemented by fully exploiting homomorphic SIMD technique and (ii) enjoys both asymptotically optimal compression rate and asymptotically good decompression complexity. Experimental results show that our approach is 4.7x to 33.2x faster than the previous best results.
\end{abstract}

\begin{IEEEkeywords}
Private database query, homomorphic encryption, homomorphic compression, SIMD
\end{IEEEkeywords}

\section{Introduction}\label{sec:intro}

In a private database query scheme (PDQ) 
\cite{ACNS:BGHWW13, WYG+17_Splinter, CCS:AkaFelSha18, PoPETS:AGHL19, KLL+19, CCS:CDGLY21}, a server maintains a database, and users send queries to retrieve records of interest from the server while keeping their queries private.\footnote{For relation of PDQ to other primitives (e.g., PIR), refer to Section~\ref{ssec:related_primitives}.} 
For example, a user might want to fetch all financial/medical records with $\texttt{family\_name} = \texttt{Smith}$ but does not want to reveal the user's interest in those records.\footnote{For more example applications of PDQ, refer to Section~\ref{ssec:example_application}.}
To support all the various types of queries (e.g., exact-match/range/wildcard queries and conjunction/disjunction of these) in a privacy-preserving manner, it is natural to consider homomorphic encryption (HE), a cryptosystem that allows computation on encrypted data.
Indeed, we can construct a PDQ scheme from HE in a generic way as follows. 
See also Fig.~\ref{fig:PDQ_intro}.

\subsubsection*{HE-based PDQ and Homomorphic Compression}
Upon receiving the HE-encrypted query from the user (Step 1 \& 2), the server homomorphically evaluates the query on the database (Step 3).
More specifically, the server first homomorphically checks the query-specified condition on all records in the database (Step~3-1. $\mathsf{Match}$).
Through $\mathsf{Match}$, the server obtains an encrypted binary vector, called an \emph{index vector}, where 1 indicates that the corresponding record meets the query condition.
The server then masks the database via homomorphic component-wise multiplication with the encrypted index vector (Step~3-2. $\mathsf{Mask}$). 
After $\mathsf{Match}$ and $\mathsf{Mask}$, the resulting ciphertext will be encrypting precisely the desired query result.

Note, however, that the size of the resulting ciphertext is equal to the size of the entire database \emph{multiplied} by the plaintext-to-ciphertext expansion ratio. 
That is, returning the result immediately after the $\mathsf{Mask}$ step is at least as expensive as downloading the entire database.
This makes little sense in most, if not all, scenarios.

Thus, one of the most crucial steps in HE-based PDQ protocols is the so-called \emph{homomorphic compression}~\cite{CCS:CDGLY21, C:LiuTro22, EC:FleLarSim23}, where the server compresses the ciphertext encrypting the masked database (Step~3-3. $\mathsf{Comp}$). 
Ideally, the size of the compressed ciphertext should be independent of the size of the entire database and only depend on the number of records to retrieve as a result of the query. 
The difficulty here is that $\mathsf{Match}$ and $\mathsf{Mask}$ are done in an oblivious manner so the server does not know which records should be returned.
When a homomorphic compression scheme is given, the server can return the compressed ciphertext with acceptable communication overhead (Step 4), and the user can decrypt-and-decompress it to learn the query result (Step 5).

\begin{figure*}
\begin{center}
\centerline{\tikzset{every picture/.style={line width=0.75pt}} %

\begin{tikzpicture}[x=0.75pt,y=0.75pt,yscale=-1,xscale=1]

\draw (100,0) node [anchor=center][inner sep=0.75pt][align=center] {\large\texttt{\textbf{User}}};
\draw[rounded corners] (-30, 15) rectangle (230, 210) {};

\draw (500,0) node [anchor=center][inner sep=0.75pt][align=center] {\large\texttt{\textbf{Server}}};
\draw[rounded corners] (370, 15) rectangle (630, 210) {};

\draw (100,30) node [anchor=north][inner sep=0.75pt][align=left][text width = 170pt][font = {\footnotesize}] 
{{\bf \ding{172} Process the Query:}
\hfill$(\bm{q}, \mathsf{st})\leftarrow\mathsf{Query}(P, x, f)$\\[1.5em] 
{\enspace Encrypt the condition of the query with HE.}\\[0.5em]
\hfill {$\bm{q} \leftarrow \left(P, \mathtt{ctxt}(x), f\right)$}\\[1em]
{\scriptsize * $P(x,\mathsf{DB}_i)$ is $1$ if $i$-th record satisfies the condition\\ \hfill and $0$ otherwise.~~}\\[25pt]
{\bf \ding{176} Recover the Answer:} \hfill$\bm{r}\leftarrow\mathsf{Recover}(\mathsf{st}, \bm{a})$\\[1.5em] 
{\enspace Decrypt and decompress the answer.}
};

\draw[dashed] (-10,135) -- (210,135);

\draw (500,30) node [anchor=north][inner sep=0.75pt][align=left][text width = 170pt][font = {\footnotesize}]  
{{\bf \ding{174} Answer the Query:} \hfill$\bm{a}\leftarrow\mathsf{Answer}(\mathsf{DB}, \bm{q})$\\[1em] 
{3-1. $\mathsf{Match}$: Homomorphically check whether each record satisfies the specified condition to obtain an encrypted sparse index vector. ($v_i\in\{0,1\}$)}\\[0.5em]
\hfill {$\mathtt{ctxt}(v_i) \leftarrow \Eval\Big(P(\;\cdot\;, \mathsf{DB}_i),\ctxt(x)\Big)$}\\[1em]
{3-2. $\mathsf{Mask}$: Mask $\mathsf{DB}$ with the index vector.}\\[0.5em]
\hfill {$\mathtt{ctxt}(d_i) \leftarrow \ctxt(v_i) \boxtimes f(\mathsf{DB}_i$})\\[1em]
{3-3. $\mathsf{Comp}$: Apply homomorphic compression.}\\[0.5em]
\hfill {$\bm{a} \leftarrow \mathsf{Comp}\left(\mathtt{ctxt}(d_1), \cdots, \mathtt{ctxt}(d_N)\right)$}
};

\draw (300,30) node [anchor=north][inner sep=0.75pt][align=left][font = {\footnotesize}]   
{
{\bf \ding{173} Send $\bm{q}$}\\[85pt] 
{\bf \ding{175} Return $\bm{a}$}
};
\draw [-stealth](260,50) -- (340,50);
\draw [stealth-](260,175) -- (340,175);

\end{tikzpicture}}
\caption{Overview of HE-based PDQ Design. Refer to Section~\ref{sec:PDQ} for detailed and formal descriptions.}
\label{fig:PDQ_intro}
\end{center}
\end{figure*}
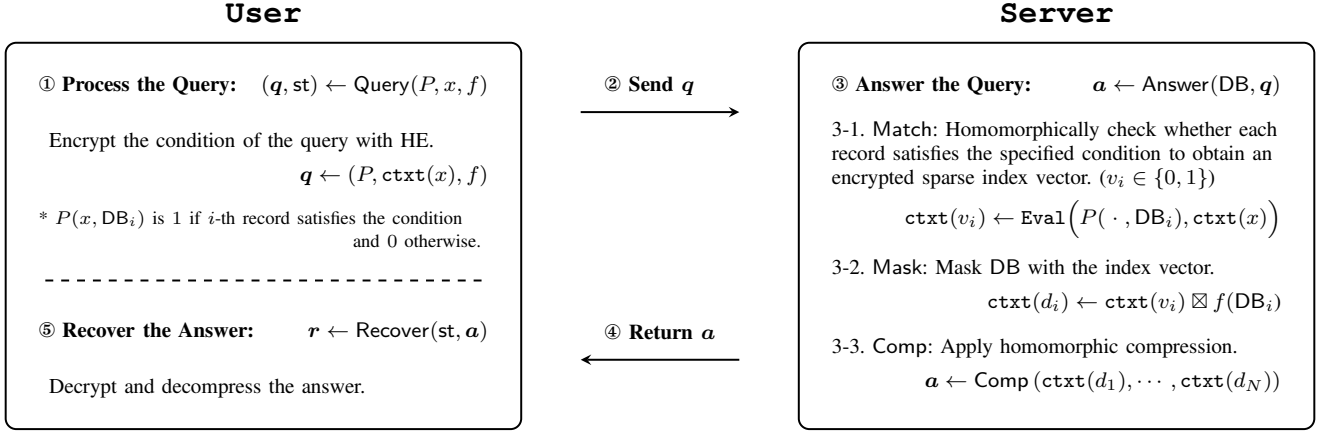

\subsubsection*{Homomorphic SIMD}
One drawback of modern HE schemes~\cite{ITCS:BraGenVai12, EPRINT:FanVer12, AC:CKKS17} is their large and complex plaintext space (e.g., $\Z_p[X]/(X^{8096}+1)$). 
In this regard, Smart-Vercauteren
~\cite{SV14} introduced the idea of \emph{packing} several small messages (e.g., elements of $\Z_p$) into a plaintext.
Such packing enables \emph{homomorphic SIMD}\footnote{Single Instruction Multiple Data} optimization.
That is, we can securely compute on \emph{multiple} messages simultaneously by homomorphically computing on a \emph{single} packed ciphertext. 
(See Section~\ref{subsec:BGV} for details on SIMD.)

SIMD has become a standard technique in HE research, and it is not too much to say that the performance of HE applications is determined by how well SIMD is utilized.
Indeed, packing and SIMD optimization are common features of modern HE schemes (e.g., BGV~\cite{ITCS:BraGenVai12}, FV~\cite{EPRINT:FanVer12}, CKKS~\cite{AC:CKKS17}) and are supported by popular HE libraries (e.g., HElib~\cite{HElib}, SEAL~\cite{SEAL}, HEaaN~\cite{HEaaN}).\footnote{One exception is TFHE~\cite{JC:CGGI20, TFHE}. Refer to Section~\ref{ssec:related} for a discussion on the practicality of TFHE-based implementation.} 
However, previous works on homomorphic compression~\cite{CCS:CDGLY21, C:LiuTro22, EC:FleLarSim23} neglected SIMD optimization. 
(See Section~\ref{ssec:related}.)

\subsection{Our Contribution}\label{ssec:contribution}
We propose an efficient HE-based PDQ protocol (Section~\ref{sec:PDQ}) with particular attention to homomorphic SIMD optimization (Section~\ref{sec:implementation}).
At the core of our construction is a new efficient homomorphic compression scheme that can fully exploit SIMD optimization (Section~\ref{sec:homcomp}).

\subsubsection*{New Homomorphic Compression Scheme}
We propose a new homomorphic compression scheme (Section~\ref{sec:homcomp}), i.e., a homomorphic algorithm that can compress ciphertexts encrypting a length-$N$ vector with at most $s$ non-zero components.
To begin with, our scheme enjoys good asymptotic efficiency.\footnote{We emphasize that this advantage is independent of SIMD optimization. In particular, numerals listed below are \emph{non}-amortized.}
Our scheme is the first to achieve both of the following properties:
\begin{itemize}
    \item Our compression rate is asymptotically optimal.\footnote{Our rate is $2s/N$, while the best possible rate is $s/N$.}\smallskip
    \item Our decompression complexity is $\mathsf{poly}(s, \log N)$.
\end{itemize}
All previous compression schemes~\cite{CCS:CDGLY21, C:LiuTro22, EC:FleLarSim23} achieve at most one of the above properties. (See Section~\ref{ssec:related} and \ref{subsec:comparison}.)

But the most important feature of our homomorphic compression scheme that leads to significant speedups is:
\begin{itemize}
    \item Our scheme can be efficiently implemented by fully exploiting homomorphic SIMD technique.
\end{itemize}
This feature is based on the observation that our compression algorithm is essentially a homomorphic matrix-vector multiplication. 
Leveraging homomorphic SIMD when implementing homomorphic matrix-vector multiplication is a well-studied topic \cite{C:HalSho14, USENIX:JuvVaiCha18, SP:LHHMQ21} (Section~\ref{subsec:MatMul}).
Although PDQ is the main motivation behind our construction, we believe that our efficient compression scheme will find interesting applications beyond PDQ (e.g., oblivious message retrieval~\cite{C:LiuTro22}).

\subsubsection*{SIMD-Aware Implementation of PDQ}

Building upon our SIMD-friendly homomorphic compression scheme, we propose an efficient HE-based PDQ protocol (Section~\ref{sec:PDQ}). 
We provide implementation details with particular attention to homomorphic SIMD (Section~\ref{sec:implementation}).
This includes, for instance, our choice of following the approach of \textsc{Pegasus}~\cite{SP:LHHMQ21} for homomorphic matrix-vector multiplication (Section~\ref{subsec:MatMul}) plus other optimizations tailored for our compression scheme.
But we also suggest a \emph{system-level} optimization (Section~\ref{subsec:RingSwitching}) as follows: 

A drawback of modern HE schemes~\cite{ITCS:BraGenVai12, EPRINT:FanVer12, AC:CKKS17} is their large plaintext space. 
The situation worsens if we want the HE scheme to support \emph{deep} computations. 
The reason is that HE parameters are related to the \emph{homomorphic capacity} of the scheme.
For instance, we often choose the \emph{ring dimension} as large as $n= 2^{16}$. 
In this case, a ciphertext has $2^{16}$ slots, i.e., is able to pack $2^{16}$ messages.

In the context of PDQ, substantial homomorphic capacity is needed for various complex queries to be computed in $\mathsf{Match}$ (Fig.~\ref{fig:PDQ_intro}).
Our homomorphic compression scheme compresses the masked database into a single ciphertext unless the number of matching data is not greater than $n$.
That is, when the number of matching data is moderate, we are wasting most of the message slots while the communication cost stays the same.
In addition, large HE parameters substantially slow down the overall PDQ protocol, or $\mathsf{Comp}$ (Fig.~\ref{fig:PDQ_intro}) in particular, since the performance of the HE scheme degrades super-linearly with $n$. 
However, the previous works~\cite{CCS:CDGLY21, C:LiuTro22, EC:FleLarSim23} neglected these problems. 

To this end, we suggest adopting the \emph{ring-switching} technique~\cite{SCN:GHPS12, GHPS13} to switch to a smaller HE parameter when possible.
This is to reduce the communication cost and accelerate follow-up homomorphic operations significantly.
Since we do not need much homomorphic capacity for the $\Comp$ step, we can switch to a smaller HE parameter right after the $\sf Mask$ step (Fig.~\ref{fig:PDQ_intro}). 
Then, the server can efficiently evaluate $\sf Comp$ and return the answer encrypted with small HE parameters.

\subsubsection*{Experimental Result}

We experimentally demonstrate the concrete efficiency of our PDQ protocol (Section~\ref{sec:experiment}). 
For instance, when a query has $16$ matching data, it takes $15.75$ seconds to compress the masked database of total $2^{17}$ records into a single ciphertext of size $110.6$KB. 
This is at least $33.2$x faster than the previous best implementation~\cite{CCS:CDGLY21}.

\subsection{Overview of Our Compression Scheme}

The high-level idea of our compression scheme is to apply \emph{power-sum method}, which encodes a vector $(v_1,\cdots, v_N)$ as the form $\sum_{i=1}^N i^j \cdot v_i$, to compress encrypted sparse vectors (or the homomorphically masked database in the context of PDQ).
Even though a previous work~\cite{CCS:CDGLY21} used the power-sum method to compress sparse \emph{index} vectors (whose components are either 0 or 1), to the best of our knowledge, our work is the first to use the method for general sparse vectors.

To describe our scheme, for the moment, suppose that the corresponding index vector is provided together with the sparse vector. 
The $i$-th component of the corresponding index vector is $1$ if and only if the $i$-th component of the sparse vector is non-zero. 
Our main idea is to leverage this index vector as a hint while decoding the sparse vector. 

The key observation is that the power-sum encoding can be understood as a matrix multiplication by a Vandermonde-like matrix. 
This observation also allows us to fully leverage homomorphic SIMD in our scheme.

More precisely, to compress a sparse vector, we encode both the sparse vector and its index vector using the power-sum method. 
To decompress, we first decode the index vector via Newton's identity (Section~\ref{ssec:newton}) as done in \cite{CCS:CDGLY21}.
Then, we can use this index vector to decode the sparse vector: 
Consider the submatrix of the Vandermonde-like matrix with respect to the decoded index vector, which gives a solvable linear system to decode the sparse vector. 
Refer to Section~\ref{sec:homcomp} for more detailed and formal descriptions.

It remains to answer how to obtain the corresponding index vector of the given sparse vector. 
In general, we can homomorphically compute the index vector from the sparse vector.
For example, when the message space is a finite field $\Z_p$, we can leverage Fermat's little theorem (Alg.~\ref{alg:Comp}).

However, when applying our compression scheme in the context of PDQ (i.e., in $\mathsf{Comp}$), we do not have to bother with this issue. 
We already have the corresponding index vector for the sparse vector (or masked database) from $\mathsf{Match}$. 
The punchline is that, in natural HE-based PDQ protocols described in Fig.~\ref{fig:PDQ_intro}, the sparse vector is obtained \emph{from} the index vector through masking ($\mathsf{Mask}$). 
We believe that this is also true in other scenarios where homomorphic compression is needed and that our compression scheme will show great performance in general.

\subsection{Example Applications of PDQ}\label{ssec:example_application}

\subsubsection{Outsourced Database}
In the scenario of the outsourced database, clients want to outsource their database storage to a service provider to save resources.
But at the same time, the clients want their database and access pattern to remain private.
In this case, a (HE-based) PDQ scheme provides a natural solution:
the server stores the HE-encrypted\footnote{In the case of a public database, where only the privacy of database access pattern is considered, the server can store the database in an unencrypted state.} database and processes encrypted queries when requested.
Otherwise, clients need to download the whole database and process queries by themselves to remain private.

\subsubsection{2PC between Client and Database-owner}
PDQ can also be a natural solution when a client and a database-owner want to perform a secure 2-party computation (2PC) on the database.
In this scenario, the client wants to learn some specific database records they do not have direct access to.
But at the same time, the client does not want to reveal their interest in such records. 
For instance, an investigative agency (client) wants to collect suspect A's communication/transaction records between B while keeping their interests secret from database-owners\footnote{In some scenarios, one might want PDQ with \emph{2-sided} privacy. Refer to Section~\ref{subsec:pdq-def} for discussions on 2-sided privacy.}.

\subsection{Related Work}\label{ssec:related}

\subsubsection{Private Database Query}
There have been rich studies on PDQ and closely-related scenarios \cite{ACNS:BGHWW13, WYG+17_Splinter, CCS:AkaFelSha18, PoPETS:AGHL19, KLL+19, CCS:CDGLY21}.
However, except \cite{CCS:CDGLY21}, most works do not consider homomorphic compression. 
Followingly, they either (1) support limited functionality or (2) suffer from communication costs and round complexity.
Schemes of \cite{CCS:AkaFelSha18, PoPETS:AGHL19} only support queries that return a unique data point, while our scheme supports general queries.
Splinter~\cite{WYG+17_Splinter} requires several servers holding a copy of the database with at least one non-corrupted server, whereas our scheme works with a single adversarial server.

\subsubsection{Homomorphic Compression Schemes}

Several previous works~\cite{CCS:CDGLY21,EC:FleLarSim23,C:LiuTro22} studied homomorphic compression schemes.

\begin{itemize}
    \item Choi et al.~\cite{CCS:CDGLY21} proposed a scheme based on Bloom-filter-like data structure, namely BFS-CODE. 
    A shortcoming of BFS-CODE is that its concrete compression rate is unsatisfactory, having to deal with false-positives.
    Moreover, it seems difficult to optimize the scheme regarding the SIMD structure of modern HE schemes.
    
    They also propose two schemes, BF-COIE and PS-COIE, that can only compress ciphertexts encrypting \emph{index} vectors, whose entries are either $0$ or $1$. 
    In PDQ, one should use them with private information retrieval (PIR) to privately retrieve the data of interest after decompressing the matching indices.
    This approach, however, requires more rounds and communication costs.

    \item Fleischhacker-Larsen-Simkin~\cite{EC:FleLarSim23} proposed two schemes.
    The first one is based on fast Fourier transform (FFT). 
    It enjoys an asymptotically optimal compression rate, but its decompression complexity is as large as $O(\sqrt{N})$, where $N$ is the size of the database.
    The second one is based on a Bloom-filter-like data structure.
    On the contrary, it enjoys good decompression complexity but has a much worse compression rate to deal with false-positives.
    \item Liu-Tromer~\cite{C:LiuTro22} construct homomorphic compression schemes in the process of designing their \emph{oblivious message retrieval} schemes.
    Similar to our approach, they first compress the index vector and use this as a hint for compressing the original sparse vector. 
    They used a Bloom-filter-like data structure for index vector compression and \emph{sparse random linear code (SRLC)} for sparse vector compression.
    However, these approaches are associated with a probability of failure and lead to an unsatisfactory compression rate having to deal with these failures.

\end{itemize}

\subsubsection{TFHE-based Implementation}
One may consider using TFHE scheme~\cite{JC:CGGI20} as it natively supports homomorphic operations without SIMD structure.
However, the plaintext-ciphertext expansion ratio of TFHE is relatively large, leading to expensive communication costs even after the compression.
In addition, TFHE is less friendly to arithmetic operations, while our scheme heavily relies on arithmetic operations.
For this reason, we use BGV scheme~\cite{ITCS:BraGenVai12}, which natively supports arithmetics in $\Z_p$.

\subsection{Related Primitives}\label{ssec:related_primitives}

\subsubsection{Searchable Encryption} Searchable Encryption~\cite{SP:SonWagPer00,SP:FVYSHG17} supports efficient search over encrypted data.
However, it requires extensive pre-processing and compromises privacy for efficiency, allowing hard-to-analyze leakage.
In contrast, our HE-based approach leaks nothing more than a predefined upper bound on the number of retrieving data points. 

\subsubsection{Private Information Retrieval (PIR)} 
PIR~\cite{FOCS:CGKS95} also allows a client to retrieve a data point from a server without revealing which item is retrieved. 
However, PIR only supports fetching an entry when given an index.
That is, when using PIR, we have to assume that the client knows the index of the desired data point, which is not always the case in practice. 
On the other hand, PDQ (and our HE-based approach in particular) can support arbitrary queries, not requiring prior knowledge of the database.

\subsubsection{Secure Multi-Party Computation (MPC)} 
We note that all the general MPC-based solutions for PDQ require $\Omega(N)$ computation and communication for both the client and the server, where $N$ is the size of the database~\cite{FOCS:Yao86, STOC:GolMicWig87}.
Although the HE-based approach might also suffer from relatively high computational costs, all the costly computation is pushed to the server, and the communication and the client-side computation remain $o(N)$.

\section{Preliminaries}\label{sec:prelim}

\subsection{Notations}\label{ssec:notation}

We use $[n]$ to denote the set $\{1, 2, \cdots , n\}$.
For a message $\mu$, we use $\ctxt(\mu)$ to denote a ciphertext encrypting the message $\mu$.
For a matrix $\bm{M}$, we will sometimes abuse the notation $\bm{M}$ to denote the matrix itself or a circuit for computing the multiplication by $\bm{M}$.

\subsection{Homomorphic Encryption}
A homomorphic encryption (HE) scheme allows us to compute an encryption of $f(m)$ when only given $f$ and a ciphertext encrypting $m$ without decryption.
An HE scheme $\mathcal E$, with a message space $\mathcal{M}$ and a ciphertext space $\mathcal{C}$, is a tuple of probabilistic polynomial time algorithms $(\Gen,\, \allowbreak \Enc,\, \Dec,\, \Eval)$ described below:

\begin{itemize}
    \item $\Gen(1^\lambda)$: The key generation algorithm takes a security parameter $1^\lambda$ as an input, and returns a secret key $\mathtt{sk}$ and public key $\mathtt{pk}$. 
    \item $\Enc_{\mathtt{pk}}(m)$: The encryption algorithm takes the public key $\mathtt{pk}$ and a message $m\in \mathcal{M}$ as inputs, and returns a ciphertext $\ctxt(m) \in \mathcal{C}$ encrypting $m$.
    \item $\Dec_{\mathtt{sk}}(\ctxt(m))$: The decryption algorithm takes the secret key $\mathtt{sk}$ and a ciphertext $\ctxt(m) \in \mathcal{C}$, and returns a message $m\in \mathcal{M}$.
    \item $\Eval_{\mathtt{pk}}(f, \ctxt(m))$: The evaluation algorithm takes the public key $\mathtt{pk}$, a circuit $f$, and a ciphertext $\ctxt(m)\in \mathcal{C}$ as inputs, and returns a ciphertext $\ctxt(f(m))\in \mathcal{C}$ that encrypts $f(m)\in \mathcal{M}$ as a message.
\end{itemize}

Throughout the paper, for a message vector $\bm{m}=(m_1, \cdots, m_{\ell})\in \mathcal{M}^{\ell}$, we denote $\ctxt(\bm{m})$ as a vector of ciphertexts $(\ctxt(m_1),\cdots , \ctxt(m_{\ell}))\in \mathcal{C}^{\ell}$.
To denote homomorphic multiplication, we use $\boxtimes$.

\subsection{Newton's Identity}\label{ssec:newton}
We review \emph{Newton's identity}, which bridges \emph{power sum} and \emph{elementary symmetric polynomials}.
Newton's identity is one of the main tools in our homomorphic compression scheme.

\begin{definition}[Power Sum Polynomial]\label{def:powersumpoly}
    Let $R$ be a ring. 
    For $n\ge 1$ and $k\ge 0$, we denote and define the $k$-th \emph{power sum polynomial} in $R[X_1, \cdots, X_n]$ as
    \[
        p_k = \sum_{i=1}^{n}X_i^k.
    \]
\end{definition}

\begin{definition}[Elementary Symmetric Polynomial]\label{def:elementarysympoly}
    Let $R$ be a ring. 
    For $n\ge 1$ and $0 \le k \le n$, we define the $k$-th \emph{elementary symmetric polynomial} $e_k$ in $R[X_1, \cdots, X_n]$ as the sum of all distinct products of $k$ distinct variables, i.e.,
    \begin{align*}
        e_0 &= 1,\\
        e_1 &= \sum_{1 \le i \le n} X_i,\\
        e_2 &= \sum_{1 \le i < j \le n} X_i X_j,\\
        &\quad \vdots\\
        e_n &= X_1 X_2 \cdots X_n.
    \end{align*}
\end{definition}

\begin{lemma}[Newton's Identity]\label{lem:newton}
    Let $R$ be a ring. 
    In the ring $R[X_1, \cdots, X_n]$, the following identity holds for all $1 \le k \le n$.
    \begin{equation*}
        k \cdot e_k = \sum_{i=1}^{k} (-1)^{i-1} e_{k-i} \cdot p_{i} %
    \end{equation*}
\end{lemma}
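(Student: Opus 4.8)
The plan is to derive Newton's identity from a single generating-function identity for the elementary symmetric polynomials, worked out formally in the power series ring $R[X_1,\dots,X_n][[t]]$; a coefficient comparison then yields the statement for all $k$ simultaneously. First I would introduce the generating polynomial
\[
  E(t) \;=\; \prod_{j=1}^{n}\bigl(1 - X_j t\bigr),
\]
and note that expanding the product and collecting powers of $t$ gives $E(t) = \sum_{m=0}^{n}(-1)^m e_m t^m$ straight from Definition~\ref{def:elementarysympoly}.

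Next I would compute the formal derivative. By the product rule, $E'(t) = -\sum_{j=1}^{n} X_j \prod_{\ell\ne j}(1 - X_\ell t)$. The key step is to observe that each factor $1 - X_j t$ is a unit in $R[X_1,\dots,X_n][[t]]$, with inverse $\sum_{i\ge 0} X_j^{\,i} t^{i}$, so that $\prod_{\ell\ne j}(1-X_\ell t) = E(t)\,(1-X_j t)^{-1} = E(t)\sum_{i\ge 0}X_j^{\,i}t^{i}$. Substituting and rearranging the sums gives
\[
  E'(t) \;=\; -\,E(t)\sum_{j=1}^{n}\sum_{i\ge 0} X_j^{\,i+1} t^{i} \;=\; -\,E(t)\sum_{i\ge 1} p_i\, t^{i-1},
\]
where the inner sum is recognized via Definition~\ref{def:powersumpoly}. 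Then I would extract the coefficient of $t^{k-1}$ on both sides for $1 \le k \le n$: on the left, termwise differentiation of $E(t)=\sum_m (-1)^m e_m t^m$ makes it $(-1)^k k\, e_k$; on the right, multiplying out $-\bigl(\sum_m (-1)^m e_m t^m\bigr)\bigl(\sum_{i\ge1} p_i t^{i-1}\bigr)$ makes it $-\sum_{i=1}^{k}(-1)^{k-i} e_{k-i}\, p_i$ (the constraint $m+i=k$ with $m\ge 0$ forces $1\le i\le k$). Equating and multiplying through by $(-1)^k$ collapses the signs to $(-1)^{i-1}$ and gives $k\, e_k = \sum_{i=1}^{k}(-1)^{i-1} e_{k-i}\, p_i$; the hypothesis $k\le n$ is exactly what makes the left-hand coefficient $(-1)^k k\,e_k$ rather than $0$.

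The main obstacle is not the algebra but the foundational bookkeeping over an \emph{arbitrary} ring $R$: one must check that $1-X_j t$ is genuinely invertible in $R[X_1,\dots,X_n][[t]]$ without any domain or field assumption (it is, since its constant term $1$ is a unit), that the infinite sums converge $t$-adically so that term-by-term coefficient extraction is valid, and that multiplying by $(-1)^k$ is harmless (it is, as $(-1)^k$ is a unit in every ring). If one prefers to avoid power series entirely, I would mention the combinatorial alternative: each monomial occurring in $\sum_{i=1}^{k}(-1)^{i-1} e_{k-i}p_i$ is determined by a squarefree variable-set $S$ (coming from $e_{k-i}$) together with a variable $b$ raised to the power $i$ (coming from $p_i$), and the operation that moves $b$ into $S$ while decrementing $i$, or out of $S$ while incrementing $i$, is a sign-reversing involution whose only fixed points are the pairs with $i=1$ and $b\notin S$; these account for each squarefree degree-$k$ monomial exactly $k$ times, i.e., sum to $k\,e_k$. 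I would present the generating-function argument as the main proof and record the involution as a self-contained alternative.
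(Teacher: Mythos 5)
Your proof is correct. Note, however, that the paper does not prove this lemma at all: Newton's identity is stated in the preliminaries as a classical fact and used as a black box, so there is no paper proof to compare against. Your generating-function argument is the standard textbook derivation and all the steps check out: $E(t)=\prod_j(1-X_jt)=\sum_m(-1)^m e_m t^m$, invertibility of $1-X_jt$ in $R[X_1,\dots,X_n][[t]]$ because its constant term is a unit, the logarithmic-derivative-style identity $E'(t)=-E(t)\sum_{i\ge1}p_i t^{i-1}$, and the comparison of $t^{k-1}$-coefficients followed by multiplication by the unit $(-1)^k$. Your attention to the foundational points (formal invertibility over an arbitrary ring, $t$-adic legitimacy of coefficient extraction, and why $k\le n$ matters for the left-hand coefficient) is exactly the right bookkeeping; the only caveat worth making explicit is that $R$ should be taken commutative (or the identity pulled back from $\Z[X_1,\dots,X_n]$, where the variables commute, and then pushed forward along the canonical map), which is implicit in the paper's use of symmetric polynomials. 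The sign-reversing involution you sketch as an alternative is also a complete and correct argument (the fixed points with $i=1$, $b\notin S$ count each squarefree degree-$k$ monomial exactly $k$ times), so either route would serve as a self-contained proof that the paper omits.
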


\section{Homomorphic Compression}
\label{sec:homcomp}

In this section, we present our homomorphic compression scheme.
We remark that the SIMD structure is \emph{not} considered throughout this section. 
We begin with formally defining the notion of \emph{homomorphic compression}.
Let $\mathcal{E}$ be an HE scheme with a message space $\mathcal{M}$ and a ciphertext space $\mathcal{C}$.

\begin{definition}[Homomorphic Compression]
    Let $\Omega$ be a subset of $\mathcal{M}^N$.
    An $N$-to-$s$ \emph{homomorphic compression} scheme for $\Omega$ (over $\mathcal{E}$) is a pair of algorithms $\mathsf{Comp}:\mathcal{C}^N \rightarrow \mathcal{C}^s$ and $\mathsf{Decomp}:\mathcal{C}^s \rightarrow \mathcal{M}^N$, such that    
\[
    \mathsf{Decomp}\left(\mathsf{Comp}\left(\ctxt(\bm{a})\right)\right)=\bm{a}, \quad \forall \bm{a}\in \Omega.
\]
\end{definition}

In this work, we are especially interested in homomorphic compression for \emph{sparse} vectors, which are defined below.

\begin{definition}[Index Set]\label{def:indexset}
    Let $R$ be a ring and $\bm{v}=(v_1,\cdots,v_N) \in R^N$ be a vector.
    We define $I_{\bm{v}}\subset [N]$, the \emph{index set} of $\bm{v}$, as follows. 
    \[
        i\in I_{\bm{v}} \iff v_i \neq 0
    \]
\end{definition}

\begin{definition}[$s$-Sparsity]
    Let $R$ be a ring and $\bm{v} \in R^N$ be a vector.
    We say that $\bm{v}$ is \emph{$s$-sparse} if $|I_{\bm{v}}|\le s$ holds, i.e., the number of non-zero components in $\bm{v}$ is at most $s$.
\end{definition}

\subsection{Homomorphic Compression for Index Vectors}
\label{ssec:compress-index}

First, we present a homomorphic compression scheme for \emph{index} vectors, defined below.  

\begin{definition}[Index Vector]\label{def:indexvec}
    Let $R$ be a ring.
    We say that $\bm{v}\in R^N$ is an \emph{index vector} if every component of $\bm{v}$ is either $0$ or $1$. 
    Furthermore, we say that $\bm{v}\in R^N$ is the index vector of $\bm{w} \in R^N$ if $\bm{v}$ is the index vector such that $I_{\bm{v}} = I_{\bm{w}}$.
\end{definition}

The main tool in this section is Newton's identity (Section~\ref{ssec:newton}) and the following Vandermonde-like matrix $\bm{C} \in \Z_p^{s\times N}$, where $p$ is a prime.
Note that $s$ arbitrary columns of $\bm{C}$ are linearly independent if $p > N$.

\begin{align}
    \bm{C} :=
    \begin{bmatrix}
        1 & 2 & \cdots & N\\
        1^2 & 2^2 & \cdots & N^2\\
        \vdots & \vdots & & \vdots\\
        1^{s} & 2^{s} & \cdots & N^{s}
    \end{bmatrix} \pmod{p}\label{eq:vandermonde}
\end{align}

\begin{lemma}\label{lem:ReconstIdx}
    Let $\bm{v} \in \Z_p^N$ be an $s$-sparse index vector, where $p>N$ is a prime.
    Furthermore, let $\bm{w}=\bm{C}\bm{v}$ (Eq.~\ref{eq:vandermonde}).
    Then, we can reconstruct $I_{\bm{v}}\subset [N]$, the index set of $\bm{v}$, given only $\bm{w}\in\Z_p^s$. 
    In particular, using $\mathsf{ReconstIdx}$ (Alg.~\ref{alg:ReconstIdx}), this costs $O(s^2 + s \log^2 s \cdot (\log s + \log p))$ operations in $\Z_p$.%
\end{lemma}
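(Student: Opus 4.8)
The plan is to recover the index set $I_{\bm v}$ from $\bm w = \bm C \bm v$ by first computing the power sums of the (unknown) multiset of indices in $I_{\bm v}$, then converting these to elementary symmetric polynomials via Newton's identity, and finally root-finding. Concretely, write $I_{\bm v} = \{i_1, \ldots, i_t\}$ with $t \le s$. Since $\bm v$ is an index vector, each nonzero component equals $1$, so the $k$-th entry of $\bm w$ is exactly $w_k = \sum_{j=1}^{t} i_j^{k} = p_k(i_1,\ldots,i_t) \bmod p$ for $k = 1, \ldots, s$. Thus $\bm w$ literally hands us the first $s \ge t$ power sums of the roots we want to find. The first step of $\mathsf{ReconstIdx}$ is therefore just to read off $p_1, \ldots, p_s$ from $\bm w$; this is free.

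The second step is to run Newton's identity (Lemma~\ref{lem:newton}) in the forward direction: since $p > N \ge t$, the integers $1, 2, \ldots, t$ are invertible mod $p$, so the recurrence $k \cdot e_k = \sum_{i=1}^{k} (-1)^{i-1} e_{k-i} p_i$ lets us solve for $e_k$ from $e_0, \ldots, e_{k-1}$ and $p_1, \ldots, p_k$, for $k = 1, \ldots, t$. The subtlety is that we do not know $t$ in advance; but this is not a real obstacle — we compute $e_1, e_2, \ldots$ up to $e_s$, and $t$ is detected as the largest index with $e_t \ne 0$ (equivalently, the monic polynomial $\prod_{j}(X - i_j)$ has degree $t$). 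One should note that once $k > t$ the naive recurrence would need $p_{k}$ with $k$ up to $s$, which is why $\bm w$ is taken to have $s$ rows; alternatively, after determining $e_1,\dots,e_t$ one checks consistency, but for the upper bound it suffices to just compute all $e_k$ for $k \le s$. Computing these $e_k$ by the straightforward convolution-style recurrence costs $O(s^2)$ operations in $\Z_p$, accounting for the first term in the claimed bound.

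The third step is to find the roots of the polynomial $P(X) = X^t - e_1 X^{t-1} + e_2 X^{t-2} - \cdots + (-1)^t e_t \in \Z_p[X]$: its roots in $\Z_p$ are precisely $i_1, \ldots, i_t$, hence recover $I_{\bm v} \subseteq [N]$. Here is where the remaining term $O(s \log^2 s \cdot(\log s + \log p))$ enters: one uses a fast probabilistic root-finding algorithm (e.g.\ Cantor--Zassenhaus / equal-degree factorization) for a degree-$t \le s$ polynomial over $\Z_p$, whose cost is $\tilde O(s \log p)$ field operations, or more precisely $O(s \log^2 s (\log s + \log p))$ with fast polynomial arithmetic (the $\log p$ from the modular exponentiation $X^{(p-1)/2} \bmod P(X)$, the $\log^2 s$ from FFT-based multiplication, the $\log s$ from the recursion depth of equal-degree splitting). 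I would cite the standard reference (von zur Gathen--Gerhard) for this subroutine rather than reprove it.

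The main obstacle — or rather the main thing needing care — is justifying correctness at the boundary: why reconstruction is unambiguous. This follows because $p > N$ guarantees the elements of $[N]$ are distinct mod $p$, so the map from subsets $I \subseteq [N]$ with $|I| \le s$ to their first $s$ power sums $(p_1,\dots,p_s)$ is injective (equivalently, any $s$ columns of $\bm C$ are linearly independent, as already noted after Eq.~\ref{eq:vandermonde}); hence the polynomial $P(X)$ we construct is the unique monic polynomial of degree $\le s$ vanishing on $I_{\bm v}$, and its roots over $\Z_p$ lying in $[N]$ are exactly $I_{\bm v}$. Everything else is bookkeeping of the complexity of the two subroutines.
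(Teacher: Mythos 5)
Your proposal is correct and follows essentially the same route as the paper's proof: read off the power sums from $\bm{w}$, apply Newton's identity to obtain the elementary symmetric polynomials in $O(s^2)$ operations, form the monic polynomial vanishing on $I_{\bm{v}}$, and recover the indices by Cantor--Zassenhaus root-finding with the stated $O(s\log^2 s\,(\log s + \log p))$ cost. The only cosmetic difference is that the paper pads $I_{\bm{v}}$ with zeros to a size-$s$ multiset and later strips the $X$-factors, whereas you compute $e_k$ up to $k=s$ and detect the true degree $t$ as the largest index with $e_t\neq 0$ (valid since $e_t=\prod_j i_j\not\equiv 0 \pmod p$); these are the same computation.
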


\begin{proof}
    The lemma is direct from Alg.~\ref{alg:ReconstIdx}. 
    Here, we give an overview of how and why Alg.~\ref{alg:ReconstIdx} works.
    Observe that, for $\bm{w}=(w_1,\cdots,w_s)$ and $\bm{v}=(v_1,\cdots,v_N)$,
    \[
        w_j = \sum_{i=1}^N i^j\cdot v_i=\sum_{i\in I_{\bm{v}}}i^j \pmod{p}.
    \]
    That is, $w_j$ is the value of the $j$-th power sum polynomial in $\Z_p[X_1.\cdots, X_s]$ (Def.~\ref{def:powersumpoly}) evaluated at \emph{multiset} $J_{\bm{v}}^s = I_{\bm{v}} \cup \{0,\cdots,0\}$, where $|J_{\bm{v}}^s| = s$ and the multiplicity of $0$ is $s-|I_{\bm{v}}|$.

    Now consider the following degree-$s$ polynomial whose product is over \emph{multiset} $J_{\bm{v}}^s$.
    \[
        f_{\bm{v}}^s(X):= \prod_{i \in J_{\bm{v}}^s} (X-i)
    \]
    Note that, through Vieta's formula, we can relate the coefficients of $f_{\bm{v}}^s(X)$ with elementary symmetric polynomials in $\Z_p[X_1.\cdots, X_s]$ (Def.~\ref{def:elementarysympoly}) evaluated at  $J_{\bm{v}}^s$. 
    Thus, by recursively applying Newton's identity (Section~\ref{ssec:newton}), we can compute the coefficients of $f_{\bm{v}}^s(X)$ from $w_j$'s. 
    For a precise description of the procedure, refer to Step~1 of Alg.~\ref{alg:ReconstIdx}.
    This procedure costs $O(s^2)$ operations in $\Z_p$.

    Then, we can reconstruct $I_{\bm{v}}$ from $f_{\bm{v}}^s(X)$ through root-finding.
    For a precise description of the procedure, refer to Step~2 of Alg.~\ref{alg:ReconstIdx}.
    This procedure costs $O(s \log^2 s \cdot (\log s + \log p))$ operations in $\Z_p$ with the Cantor-Zassenhaus algorithm~\cite{GG13_MCA}.
\end{proof}

\begin{algorithm}
\caption{$\mathsf{ReconstIdx}$}\label{alg:ReconstIdx}
\begin{algorithmic}
\State \textbf{Input:} $\bm{w}=(w_1,\cdots,w_s) \in \Z_p^s$
\smallskip
\State \textbf{Step 1}: Compute $f_{\bm{v}}^s(X)$ from $\bm{w}$ using Newton's identity.
\Indent
\State $a_0 \gets 1$
\For{$k = 1, \cdots, s$}
    \State $a_k \gets k^{-1} \cdot \sum_{i=1}^{k} (-1)^{i-1} a_{k-i} \cdot w_{i}$
    \Comment{see Lem.~\ref{lem:newton}}
\EndFor
\State $f_{\bm{v}}^s(X) \gets \sum_{k=0}^s (-1)^k a_k X^{s-k}$
\EndIndent
\smallskip
\State \textbf{Step 2}: Reconstruct $I_{\bm{v}}$ from $f_{\bm{v}}^s(X)$.
\Indent
\State $g_{\bm{v}}^s(X) \gets f_{\bm{v}}^s(X)$
\While {$X$ divides $g_{\bm{v}}^s(X)$}
\State $g_{\bm{v}}^s(X) \gets g_{\bm{v}}^s(X)/X$
\EndWhile
\State $I_{\bm{v}} \gets$ Find zeroes of $g_{\bm{v}}^s(X)$
\Comment{e.g. by Cantor-Zassenhaus}
\EndIndent
\State \textbf{return} $I_{\bm{v}}$
\end{algorithmic}
\end{algorithm}

Let $\mathcal{E} = (\Gen,\, \allowbreak \Enc,\, \Dec,\, \Eval)$ be an HE scheme with message space $\Z_p$. 

\begin{theorem}[Homomorphic Compression for Index Vectors]\label{thm:CompIdx}
Pair of algorithms $(\mathsf{CompIdx}, \allowbreak \mathsf{DecompIdx})$ (Alg.~\ref{alg:CompIdx} and  \ref{alg:DecompIdx}) is an $N$-to-$s$ homomorphic compression scheme for $s$-sparse index vectors in $\Z_p^N$ (over $\mathcal{E}$).
Furthermore,
\begin{enumerate}
    \item $\mathsf{CompIdx}$ (Alg.~\ref{alg:CompIdx}) costs $O(N \cdot s)$ homomorphic additions and multiplications.
    \item Step~1 of $\mathsf{DecompIdx}$ (Alg.~\ref{alg:DecompIdx}) costs $s$ decryptions and $O(s^2 + s \log^2 s \cdot (\log s + \log p))$ operations in $\Z_p$.
\end{enumerate}
\end{theorem}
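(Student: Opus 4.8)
The plan is to reduce the correctness part to Lemma~\ref{lem:ReconstIdx} and to read off the two complexity bounds directly from the structure of the two algorithms. First I would make explicit what $\mathsf{CompIdx}$ and $\mathsf{DecompIdx}$ do. The compression algorithm $\mathsf{CompIdx}$ homomorphically applies the Vandermonde-like matrix $\bm{C}$ of Eq.~\ref{eq:vandermonde} to the input ciphertext vector, producing $\ctxt(\bm{w})$ with $\bm{w}=\bm{C}\bm{v}\in\Z_p^s$; concretely, for each row $j\in[s]$ it forms $\ctxt(w_j)=\sum_{i=1}^{N} i^j\cdot\ctxt(v_i)$ using plaintext-ciphertext multiplications and homomorphic additions, with the scalars $i^j\bmod p$ precomputed. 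The decompression algorithm $\mathsf{DecompIdx}$ decrypts the $s$ component ciphertexts to recover $\bm{w}$ (Step~1), invokes $\mathsf{ReconstIdx}(\bm{w})$ (Alg.~\ref{alg:ReconstIdx}) to obtain the index set $I_{\bm{v}}\subseteq[N]$, and finally outputs the unique index vector supported on $I_{\bm{v}}$ (Step~2).

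For correctness I would argue as follows. By the homomorphic properties of $\mathcal{E}$ — closure under addition and under multiplication by plaintext scalars — $\Eval$ used as above yields a ciphertext encrypting exactly $\bm{C}\bm{v}$, so after decryption $\mathsf{DecompIdx}$ holds $\bm{w}=\bm{C}\bm{v}$. Since $\bm{v}$ is an $s$-sparse index vector over $\Z_p$ with $p>N$, Lemma~\ref{lem:ReconstIdx} guarantees that $\mathsf{ReconstIdx}(\bm{w})$ returns $I_{\bm{v}}$. By Definition~\ref{def:indexvec}, an index vector is uniquely determined by its index set, so reconstructing the $0/1$ vector supported on $I_{\bm{v}}$ recovers $\bm{v}$ exactly. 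Hence $\mathsf{DecompIdx}(\mathsf{CompIdx}(\ctxt(\bm{v})))=\bm{v}$ for every $s$-sparse index vector $\bm{v}$, which is precisely the defining property of an $N$-to-$s$ homomorphic compression scheme for this class.

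The two complexity claims are then bookkeeping. For claim~(1): computing $\ctxt(w_j)$ for a fixed $j$ costs $N$ homomorphic plaintext-ciphertext multiplications and $N-1$ homomorphic additions, so ranging over $j\in[s]$ gives $O(N\cdot s)$ homomorphic additions and multiplications in total. For claim~(2): Step~1 of $\mathsf{DecompIdx}$ performs $s$ decryptions to obtain $\bm{w}$ and then a single call to $\mathsf{ReconstIdx}$, whose cost is exactly $O(s^2+s\log^2 s\cdot(\log s+\log p))$ operations in $\Z_p$ by Lemma~\ref{lem:ReconstIdx}; adding these gives the stated bound.

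I do not expect a genuine obstacle here, since the mathematical heart (the Newton's-identity plus root-finding decoding) is already encapsulated in Lemma~\ref{lem:ReconstIdx}. The only points that demand care are (a) making explicit that $\mathsf{CompIdx}$ realizes the linear map $\bm{C}$ using \emph{only} the operations for which correctness of $\Eval$ is guaranteed — additions and plaintext multiplications — so that the decryption in $\mathsf{DecompIdx}$ indeed recovers $\bm{C}\bm{v}$; and (b) recording the hypothesis $p>N$, inherited from Lemma~\ref{lem:ReconstIdx}, without which distinct indices could collide modulo $p$ and the reconstruction would fail.
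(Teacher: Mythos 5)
Your proposal is correct and takes essentially the same route as the paper: the paper's proof is simply the observation that the theorem follows directly from Lemma~\ref{lem:ReconstIdx}, and your write-up just spells out the reduction (correctness of the homomorphic evaluation of $\bm{C}$, the call to $\mathsf{ReconstIdx}$, and the cost bookkeeping) that the paper leaves implicit. Your points (a) and (b) are sensible care, but they introduce no new idea beyond what the paper's one-line proof delegates to Lemma~\ref{lem:ReconstIdx}.
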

\begin{proof}
The theorem is direct from Lem.~\ref{lem:ReconstIdx}. 
\end{proof}

\begin{algorithm}
\caption{$\mathsf{CompIdx}$}\label{alg:CompIdx}
\begin{algorithmic}
\smallskip
\State \textbf{Input:} $\ctxt(\bm{v})$
\Comment{ciphertexts of $\bm{v}\in \Z_p^N$}
\smallskip
\State $\ctxt(\bm{w}) \gets \Eval(\bm{C}, \ctxt(\bm{v}))$ \Comment{see Eq.~\ref{eq:vandermonde} and Sec.~\ref{ssec:notation}}
\smallskip
\State \textbf{return} $\ctxt(\bm{w})$
\end{algorithmic}
\end{algorithm}

\begin{algorithm}
\caption{$\mathsf{DecompIdx}$}\label{alg:DecompIdx}
\begin{algorithmic}
\smallskip
\State \textbf{Input:} $\ctxt(\bm{w})$
\Comment{ciphertexts of $\bm{w}\in \Z_p^s$}
\smallskip
\State \textbf{Step 1}: Reconstruct $I_{\bm{v}}$ from $\ctxt(\bm{w})$.
\Indent
    \State $\bm{w} \gets \Dec(\ctxt(\bm{w}))$
    \State $I_{\bm{v}} \gets \mathsf{ReconstIdx}(\bm{w})$
    \Comment{see Alg.~\ref{alg:ReconstIdx}}
\EndIndent
\smallskip
\State \textbf{Step 2}: Reconstruct $\bm{v}$ from $I_{\bm{v}}$.
\State \Comment{redundant step for formatting}
\smallskip
\Indent
    \State $\bm{v} \gets$ $(v_1,\cdots,v_N)$, where  
    $v_i =\begin{cases}
			1, & \text{if $i \in I_{\bm{v}}$}\\
            0, & \text{otherwise.}
		 \end{cases}$
\EndIndent
\State \textbf{return} $\bm{v}$
\end{algorithmic}
\end{algorithm}

We note that our scheme for index vectors is essentially the same as the PS-COIE of \cite{CCS:CDGLY21}.
Our novelty lies in interpreting the \emph{power-sum} encoding as a matrix multiplication by the Vandermonde-like matrix $\bm{C}$. 
This observation later allows us to extend the scheme into a compression scheme that also works for non-index vectors (Section~\ref{ssec:compress-sparse}) and to fully leverage homomorphic SIMD structure (Section~\ref{subsec:MatMul}).

\subsection{Homomorphic Compression for Sparse Vectors}
\label{ssec:compress-sparse}

Now, we construct a new homomorphic compression scheme for general sparse vectors with entries that are not necessarily $0$ or $1$.
Our main idea is to compress a general sparse vector $\bm{d}$ as $\bm{C}\bm{d}$ as before, but use its index vector (Definition~\ref{def:indexvec}) as a \emph{hint} when decompressing.
We begin with a lemma.

\begin{lemma}\label{lem:Decomp}
    Let $\bm{d}=(d_1,\cdots, d_N) \in \Z_p^N$ be an $s$-sparse vector, where $p > N$ is a prime. 
    Furthermore, let $\bm{e} = \bm{C} \bm{d}$ (Eq.~\ref{eq:vandermonde}). %
    We can reconstruct the following \emph{sparse} representation of $\bm{d}$, given only $\bm{e}\in \Z_p^s$ and $I_{\bm{d}}\subset [N]$, the index set of $\bm{d}$ (Def.~\ref{def:indexset}).
    \[
        \mathsf{sparse}(\bm{d})
        := \left\{ (i,d_i): i\in I_{\bm{d}}\right\}
    \]
    In particular, using $\mathsf{Reconst}$ (Alg.~\ref{alg:Reconst}), this costs $O(s^3)$ operations in $\Z_p$.
\end{lemma}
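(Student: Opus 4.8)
The plan is to use the known support $I_{\bm{d}}$ to collapse the relation $\bm{e} = \bm{C}\bm{d}$ into a square, invertible linear system whose unique solution gives the nonzero entries of $\bm{d}$.

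First I would record the immediate consequence of $s$-sparsity. Write $t := |I_{\bm{d}}| \le s$ and $I_{\bm{d}} = \{i_1 < \cdots < i_t\}$. By definition of the index set, $d_i = 0$ for every $i \notin I_{\bm{d}}$, so for each $j \in [s]$,
\[
    e_j = \sum_{i=1}^N i^j d_i = \sum_{k=1}^t i_k^j \, d_{i_k} \pmod p .
\]
In matrix form this reads $\bm{e} = \bm{C}_{I_{\bm{d}}}\,\bm{d}_{I_{\bm{d}}}$, where $\bm{C}_{I_{\bm{d}}} \in \Z_p^{s\times t}$ is the column submatrix of $\bm{C}$ indexed by $I_{\bm{d}}$ and $\bm{d}_{I_{\bm{d}}} = (d_{i_1},\ldots,d_{i_t})^\top$. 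Hence producing $\mathsf{sparse}(\bm{d}) = \{(i_k, d_{i_k}) : k\in[t]\}$ is exactly the problem of recovering $\bm{d}_{I_{\bm{d}}}$ from this system.

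Next I would show the system pins down $\bm{d}_{I_{\bm{d}}}$ uniquely by exhibiting an invertible $t\times t$ subsystem, namely the top $t$ rows of $\bm{C}_{I_{\bm{d}}}$: let $V \in \Z_p^{t\times t}$ have entries $V_{jk} = i_k^{\,j}$ for $1\le j,k\le t$. Pulling the factor $i_k$ out of the $k$-th column turns $V$ into a Vandermonde matrix in the $i_k$'s, so $\det V = \bigl(\prod_{k=1}^{t} i_k\bigr)\prod_{1\le k<l\le t}(i_l - i_k)$; since $1\le i_k \le N < p$ and the $i_k$ are pairwise distinct, none of these factors vanishes mod $p$, so $V$ is invertible over $\Z_p$ (this is the same computation behind the remark that any $s$ columns of $\bm{C}$ are independent when $p>N$). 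Therefore the subsystem consisting of the first $t$ entries of $\bm{e}$ has the unique solution $\bm{d}_{I_{\bm{d}}} = V^{-1}(e_1,\ldots,e_t)^\top$, which $\mathsf{Reconst}$ computes and outputs paired with $i_1,\ldots,i_t$. For the complexity claim: building $V$ by iterated multiplication costs $O(t^2) = O(s^2)$ operations in $\Z_p$, and solving the $t\times t$ system by Gaussian elimination costs $O(t^3) = O(s^3)$; the total is $O(s^3)$ as stated. (Alternatively one may row-reduce the full $s\times t$ system $\bm{C}_{I_{\bm{d}}}\bm{x}=\bm{e}$ directly; consistency with a unique solution follows from the same full-column-rank argument, at cost $O(s\,t^2) = O(s^3)$.)

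I do not expect a genuine obstacle here — morally the statement is "known support implies a solvable Vandermonde system." The points that need attention are (i) justifying that selecting the \emph{first} $t$ rows already yields an invertible block, which the explicit Vandermonde determinant makes transparent, and (ii) handling the degenerate cases cleanly so the procedure is well-defined on every admissible input: $t=0$ forces $\bm{d} = \bm{0}$ and $\bm{e} = \bm{0}$, with output $\emptyset$, and $t < s$ is handled by simply using the $t\times t$ subsystem rather than all $s$ equations.
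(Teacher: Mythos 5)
Your proposal is correct and follows essentially the same route as the paper: restrict $\bm{C}$ to the columns indexed by the known support $I_{\bm{d}}$ and solve the resulting linear system, whose solution is unique by the Vandermonde-type linear independence of any $\le s$ columns of $\bm{C}$ when $p>N$, at cost $O(s^3)$ operations in $\Z_p$. The only difference is cosmetic: you explicitly exhibit and invert the top $t\times t$ Vandermonde block (thereby also proving the column-independence fact the paper merely asserts), whereas the paper's $\mathsf{Reconst}$ solves the full $s\times \ell$ system $\bm{\hat{C}}\bm{x}=\bm{e}$ directly after verifying that $(d_{i_1},\cdots,d_{i_\ell})$ is its unique solution.
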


\begin{proof}
The lemma is direct from Alg.~\ref{alg:Reconst}. 
Let  $I_{\bm{d}} = \{i_1, \cdots , i_{\ell}\}$, where $\ell = |I_{\bm{d}}|$, and let $\bm{C}_k$ denote the $k$-th column of $\bm{C}$, i.e., $\bm{C}=[\bm{C}_1|\cdots |\bm{C}_N]$.
Then, set $\bm{\hat{C}} = [\bm{C}_{i_1} | \cdots | \bm{C}_{i_\ell}]$.
Note that, since $\ell \le s$, linear equation $\bm{\hat{C}}\bm{x}=\bm{e}$ has at most one solution. 
It is easy to check that $\bm{\hat{d}}=(d_{i_1}, \cdots , d_{i_{\ell}})$ is the unique solution of the equation as follows. 
\[
    \bm{\hat{C}}\bm{\hat{d}} 
    = \sum_{k=1}^\ell d_{i_k} \cdot \bm{C}_{i_k}
    = \sum_{i=1}^N d_i \cdot \bm{C}_i
    = \bm{C} \bm{d}
    = \bm{e}
\]
The runtime of Alg.~\ref{alg:Reconst} is dominated by finding the solution of $\bm{\hat{C}}\bm{x}=\bm{e}$, which costs $O(s^3)$ operations in $\Z_p$.
\end{proof}

\begin{algorithm}
\caption{$\mathsf{Reconst}$}\label{alg:Reconst}
\begin{algorithmic}
\smallskip
\State \textbf{Input:} $\bm{e}\in \Z_p^s$, $I_{\bm{d}}=\{i_1, \cdots, i_\ell\}\subset [N]$
\Comment{$|I_{\bm{d}}| = \ell \le s$}
\medskip
    \State $\bm{\hat{C}} \gets [\bm{C}_{i_1} | \cdots | \bm{C}_{i_\ell}]$ 
    \Comment{\parbox[t]{.45\linewidth}{$s\times \ell$ matrix whose $k$-th column is $i_k$-th column of $\bm{C}$}} \smallskip %
    \State $\bm{\hat{d}}=(\hat{d}_1, \cdots, \hat{d}_\ell) \gets$ Solve $\bm{\hat{C}} \bm{x} = \bm{e}$ for $\bm{x}$
    \smallskip
    \State $\mathsf{sparse}(\bm{d}) \gets \{ (i_1,\hat{d}_1), \cdots, (i_\ell,\hat{d}_\ell) \}$
\medskip
\State \textbf{return} $\mathsf{sparse}(\bm{d})$
\end{algorithmic}
\end{algorithm}

Let $\mathsf{Power}_{\ell}$ denote an arithmetic circuit over $\Z_p$ such that 
\[
    \mathsf{Power}_{\ell} (\bm{v}) = (v_1^{\ell},\cdots,v_N^{\ell}),
\]
where $\bm{v} = (v_1, \cdots , v_N) \in \Z_p^N$. 
Note that, by Fermat's little theorem, $\mathsf{Power}_{p-1}(\bm{v})$ is the index vector of $\bm{v}$, i.e.
\[
    v_i^{p-1} = 
    \begin{cases}
        1, & \text{if $i \in I_{\bm{v}}$}\\
        0, & \text{otherwise.}
    \end{cases}
\]

\begin{theorem}[Homomorphic Compression for Vectors]\label{thm:Comp}
Pair of algorithms $(\mathsf{Comp},\,\allowbreak \mathsf{Decomp})$ (Alg.~\ref{alg:Comp}, \ref{alg:Decomp}) is an $N$-to-$(2s)$ homomorphic compression scheme for $s$-sparse vectors in $\Z_p^N$ (over $\mathcal{E}$). 
Furthermore,
    \begin{enumerate}
        \item $\mathsf{Comp}$ (Alg.~\ref{alg:Comp}) costs $O(N \cdot s)$ homomorphic additions and multiplications.
        \item Step 1 of $\mathsf{Decomp}$ (Alg.~\ref{alg:Decomp}) costs $2s$ decryptions and $O(s^3 + s \log^2 s \cdot (\log s + \log p))$ operations in $\Z_p$.
    \end{enumerate}
\end{theorem}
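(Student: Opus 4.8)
The plan is to reduce the statement to the two reconstruction lemmas already established (Lemmas~\ref{lem:ReconstIdx} and~\ref{lem:Decomp}), so that the only real content is verifying that the pieces of $\mathsf{Comp}$ and $\mathsf{Decomp}$ interlock correctly. Recall the structure of the algorithms: $\mathsf{Comp}(\ctxt(\bm d))$ first homomorphically evaluates $\mathsf{Power}_{p-1}$ on $\ctxt(\bm d)$ to obtain $\ctxt(\bm v)$, then applies $\Eval(\bm C,\cdot)$ (with $\bm C$ as in Eq.~\ref{eq:vandermonde}) to both $\ctxt(\bm v)$ and $\ctxt(\bm d)$, and returns the $2s$ ciphertexts $\ctxt(\bm w)=\ctxt(\bm C\bm v)$ and $\ctxt(\bm e)=\ctxt(\bm C\bm d)$; dually, Step~1 of $\mathsf{Decomp}$ decrypts $\bm w$ and $\bm e$, runs $\mathsf{ReconstIdx}$ on $\bm w$ to recover an index set, passes that index set together with $\bm e$ to $\mathsf{Reconst}$, and Step~2 inflates the returned sparse representation to a length-$N$ vector. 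Throughout, the message space of $\mathcal E$ is $\Z_p$, and we use $p>N$ (as needed for the defining property of $\bm C$).

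For correctness I would proceed in four short steps. First, by Fermat's little theorem — exactly the observation recorded immediately before the theorem — $\bm v=\mathsf{Power}_{p-1}(\bm d)$ is the index vector of $\bm d$ in the sense of Definition~\ref{def:indexvec}; in particular $I_{\bm v}=I_{\bm d}$, and $\bm v$ is $s$-sparse because $\bm d$ is. Second, since $\bm v$ is an $s$-sparse index vector and $\bm w=\bm C\bm v$, Lemma~\ref{lem:ReconstIdx} applies, so $\mathsf{ReconstIdx}(\bm w)$ returns $I_{\bm v}=I_{\bm d}$. Third, since $\bm d$ is $s$-sparse and $\bm e=\bm C\bm d$, Lemma~\ref{lem:Decomp} applies with the hint $I_{\bm d}$ just recovered, so $\mathsf{Reconst}(\bm e,I_{\bm d})$ returns $\mathsf{sparse}(\bm d)=\{(i,d_i):i\in I_{\bm d}\}$. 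Fourth, Step~2 of $\mathsf{Decomp}$ rebuilds a vector by placing $d_i$ at coordinate $i$ for $i\in I_{\bm d}$ and $0$ elsewhere, which equals $\bm d$ by the definition of the index set (Definition~\ref{def:indexset}). Therefore $\mathsf{Decomp}(\mathsf{Comp}(\ctxt(\bm d)))=\bm d$ for every $s$-sparse $\bm d\in\Z_p^N$, and since $\mathsf{Comp}$ outputs an element of $\mathcal{C}^{2s}$, the pair is an $N$-to-$2s$ homomorphic compression scheme for $s$-sparse vectors.

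The two quantitative claims I would read off the algorithms together with the lemmas. $\mathsf{Comp}$ consists of one homomorphic $\mathsf{Power}_{p-1}$ evaluation ($O(N\log p)$ homomorphic multiplications by repeated squaring) followed by two homomorphic multiplications by the $s\times N$ matrix $\bm C$ (each $O(N\cdot s)$ homomorphic additions and multiplications); the matrix products give the claimed $O(N\cdot s)$ bound. For Step~1 of $\mathsf{Decomp}$: the compressed ciphertext has $2s$ components ($s$ from $\bm w$ and $s$ from $\bm e$), so decryption costs $2s$ calls; $\mathsf{ReconstIdx}(\bm w)$ then costs $O(s^2+s\log^2 s\cdot(\log s+\log p))$ operations in $\Z_p$ by Lemma~\ref{lem:ReconstIdx}, and $\mathsf{Reconst}(\bm e,I_{\bm d})$ costs $O(s^3)$ operations in $\Z_p$ by Lemma~\ref{lem:Decomp}; summing and absorbing the $O(s^2)$ term into $O(s^3)$ yields $O(s^3+s\log^2 s\cdot(\log s+\log p))$, as stated.

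The single point that actually requires care — the ``obstacle'', modest as it is — is the first correctness step: one must ensure that the index set fed as a hint into $\mathsf{Reconst}$ is exactly $I_{\bm d}$, since this is precisely what makes the submatrix $\hat{\bm C}$ assembled inside $\mathsf{Reconst}$ the one whose columns are indexed by the support of $\bm d$, so that the unique-solvability argument of Lemma~\ref{lem:Decomp} carries over verbatim. This rests on two facts worth isolating: Fermat's little theorem over $\Z_p$ (so that $v_i^{p-1}=1$ iff $d_i\neq 0$, and $=0$ otherwise), and $p>N$ (so that $\mathsf{ReconstIdx}$ recovers the support exactly and the linear system in $\mathsf{Reconst}$ has a unique solution). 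Everything else is bookkeeping plus invocation of the prior lemmas.
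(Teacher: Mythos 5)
Your proposal is correct and follows essentially the same route as the paper, whose proof is simply the one-line observation that the theorem is direct from Lemma~\ref{lem:Decomp} and Theorem~\ref{thm:CompIdx}; you merely spell out the reduction (Fermat's little theorem gives $I_{\bm v}=I_{\bm d}$, $\mathsf{ReconstIdx}$ recovers the support, $\mathsf{Reconst}$ recovers the values, and the costs add up). Your side remark that the $\mathsf{Power}_{p-1}$ step itself costs $O(N\log p)$ homomorphic multiplications, not covered by the stated $O(N\cdot s)$ matrix-product bound, is a fair observation about the theorem's bookkeeping, but it does not change the argument.
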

\begin{proof}
The theorem is direct from Lem.~\ref{lem:Decomp} and Thm.~\ref{thm:CompIdx}.
\end{proof}

\begin{algorithm}
\caption{$\mathsf{Comp}$}\label{alg:Comp}
\begin{algorithmic}
\smallskip
\State \textbf{Input:} $\ctxt(\bm{d})$
\Comment{ciphertexts of $\bm{d}\in \Z_p^N$}
\smallskip
\State $\ctxt(\bm{e}) \gets \Eval(\bm{C},\ctxt(\bm{d}))$ 
\Comment{see Eq.~\ref{eq:vandermonde} and Sec.~\ref{ssec:notation}}
\State $\ctxt(\bm{v}) \gets \Eval(\mathsf{Power}_{p-1}, \ctxt(\bm{d}))$
\State $\ctxt(\bm{w}) \gets \mathsf{CompIdx}(\ctxt(\bm{v}))$
\Comment{see Alg.~\ref{alg:CompIdx}}
\smallskip
\State \textbf{return} $\ctxt(\bm{e}), \ctxt(\bm{w})$
\end{algorithmic}
\end{algorithm}

\begin{algorithm}
\caption{$\mathsf{Decomp}$}\label{alg:Decomp}
\begin{algorithmic}
\smallskip
\State \textbf{Input:} $\ctxt(\bm{e}), \ctxt(\bm{w})$ \Comment{ciphertexts of $\bm{e}, \bm{w}\in \Z_p^s$}
\smallskip
\State \textbf{Step 1}: Reconstruct $\mathsf{sparse}(\bm{d})$ from $\ctxt(\bm{e})$ and $\ctxt(\bm{w})$.
\Indent
    \State $\bm{e} \gets \Dec(\ctxt(\bm{e}))$
    \State $I_{\bm{v}} \gets \mathsf{DecompIdx.Step1}(\ctxt(\bm{w}))$
    \Comment{see Alg.~\ref{alg:DecompIdx}}
    \State $\mathsf{sparse}(\bm{d}) \gets \mathsf{Reconst}(\bm{e}, I_{\bm{v}})$
    \Comment{see Alg.~\ref{alg:Reconst}}
\EndIndent
\smallskip
\State \textbf{Step 2}: Reconstruct $\bm{d}$ from $\mathsf{sparse}(\bm{d})$.
\State \hfill \Comment{redundant step for formatting}
\smallskip
\Indent
    \State $\bm{d} \gets (d_1,\cdots, d_N)$,
    where
    $d_{i} =\begin{cases}
			x, & (i, x) \in \mathsf{sparse}(\bm{d})\\
            0, & \text{otherwise.}
		 \end{cases}$
\EndIndent
\State \textbf{return} $\bm{d}$
\end{algorithmic}
\end{algorithm}

\subsection{Comparison}\label{subsec:comparison}

\begin{table*}
\centering
\caption{Comparison of homomorphic compression schemes}
\label{tab:comparison}
\begin{tabular*}{0.9 \linewidth}{@{\extracolsep{\fill}} l l l l l}
\toprule
& Technique & \# $\ctxt$ & $\mathsf{Comp}$ (HE op.) & $\Decomp$ ($\Dec$ Excluded) \\
\midrule
Choi et al.~\cite{CCS:CDGLY21} & Bloom-filter-like & $O(s(\kappa+\log s))$ & $O(N(\kappa + \log s))$ & $O(s(\kappa+\log s))$ PRP \\
Fleischhacker et al.~\cite{EC:FleLarSim23} & Bloom-filter-like & $O(\frac{\kappa s}{\log s})$ & $O(\frac{N\kappa}{\log s})$ & $O(\frac{\kappa s}{\log s})$ \qquad\quad Hash \\
Fleischhacker et al.~\cite{EC:FleLarSim23} & FFT & $O(s)$ & $O(N \log N)$ & $O(s\sqrt{N})$ \qquad~ $\Z_p$ op.\\
Liu-Tromer~\cite{C:LiuTro22} & Bloom-filter-like/SRLC & $O(\kappa s \log s)$
& $O(N\cdot s+\kappa s \log s)$
& $O(s^3)$ \qquad\qquad $\Z_p$ op. \\
Ours &  (Extended) Newton's Id & $O(s)$ & $O(N\cdot s)$ & $O(s^3)$ \qquad\qquad $\Z_p$ op.\\
\bottomrule
\hspace{0.5mm}\\
\multicolumn{5}{l}{\footnotesize $*$ $N$: length of a sparse vector
\hfill \footnotesize $*$ $s$: sparsity of the sparse vector
\hfill \footnotesize $*$ $\kappa$: failure probability parameter}
\end{tabular*}
\end{table*}

We compare the asymptotic efficiency of homomorphic compression schemes in Table~\ref{tab:comparison}. 
We remark that the SIMD structure is \emph{not} considered here. 
Only compression schemes for general sparse vectors are considered, not schemes for index vectors. 
(i.e., Here we do not consider BF-COIE and PS-COIE~\cite{CCS:CDGLY21}.)

We compare the number of resulting ciphertexts and computational complexity in the case of compressing a sparse vector in $\Z_p^N$ with sparsity $s$.
For decompression, we compare computational complexity \emph{excluding} the number of decryptions, as it is the same as the number of compressed ciphertexts.

In Table~\ref{tab:comparison}, notice that homomorphic compression schemes based on Bloom-filter-like data structures~\cite{CCS:CDGLY21, EC:FleLarSim23, C:LiuTro22} have less efficient compression rates due to the parameter $\kappa$ related to the failure probability of $2^{-\kappa}$.\footnote{For the scheme of \cite{EC:FleLarSim23}, we note that the scheme works only if $\kappa \ge \log s$. In practical parameters, $\kappa/\log s$ can be as large as $20$.}
On the other hand, our scheme and the FFT-based scheme of \cite{EC:FleLarSim23} achieve an asymptotically optimal compression rate, i.e., they are $N$-to-$O(s)$ compression schemes.
However, the FFT-based scheme has an impractical decompression complexity of $O(s\sqrt{N})$ depending on the size of the entire database.
Although our decompression complexity might seem like a flaw compared to that of \cite{CCS:CDGLY21} and \cite{EC:FleLarSim23}, we note that the decompression cost is actually dominated by the number of decryptions, which is the number of resulting ciphertexts. (See Section~\ref{sec:experiment}.) 

\begin{remark}[SIMD-Friendliness]\label{rem:SIMD}
We note that our scheme and the FFT-based scheme of Fleischhacker et al.~\cite{EC:FleLarSim23} have SIMD-friendly structures as they are essentially matrix multiplications (Section~\ref{sec:implementation}). 
On the other hand, it seems difficult to leverage SIMD structures of HE schemes in compression schemes those based on Bloom-filter-like data structures~\cite{CCS:CDGLY21, EC:FleLarSim23, C:LiuTro22}.\footnote{Choi et al.~\cite{CCS:CDGLY21} and Liu-Tromer~\cite{C:LiuTro22} also mention SIMD optimizations. However, their methods only parallelize minuscule portions of the overall procedures.}
\end{remark}

\section{Private Database Query (PDQ)}\label{sec:PDQ}

In this section, we formalize Private Database Query (PDQ) and present a generic PDQ protocol under our framework of homomorphic compression.
We remark that the SIMD structure is \emph{not} considered throughout this section. 

\subsection{Definition}\label{subsec:pdq-def}

In a \emph{private database query scheme (PDQ)}, a server maintains a database, and users send queries to retrieve records of interest from the server while keeping their queries private.
More precisely, we model a query $\bm{q}$ as a tuple of a predicate $P$, a condition $x$, and a post-processing circuit $f$. 
The goal of PDQ is to allow the client to retrieve the set of all $f(\DB_i)$ such that $P(x, \DB_i)$ is true while keeping $x$ private from the server.
We formally define PDQ as follows.

\begin{definition}[Private Database Query]\label{def:PDQ}
    Let $\mathcal{D}$ be the set of all possible data points\footnote{The definition can be easily extended to encrypted $\mathcal{D}$.} and $\mathcal{F}$ be the set of functions $\mathcal{D} \to \{0,1\}^*$.
    Let $\mathcal{P}$ be the set of predicates $\{0,1\}^* \times \mathcal{D} \rightarrow \{0,1\}$. 
    Define the query space $\mathcal{Q}=\mathcal{P} \times \{0,1\}^* \times \mathcal{F}$.

    A \emph{private database query scheme (PDQ)} is a two-party protocol between a server and a client, where the server holds a length-$N$ database $\mathsf{DB} = (\mathsf{DB}_1, \cdots, \DB_N)\in \mathcal{D}^N$.
    A PDQ scheme consists of three algorithms $(\mathsf{Query}, \mathsf{Answer}, \allowbreak \mathsf{Recover})$, described below.
    The usage of these algorithms is described in Fig.~\ref{fig:PDQ_intro}.
    
    \begin{itemize}
        \item $(\bm{q}, \mathsf{st})\leftarrow\mathsf{Query}(P, x, f)$:
        Takes $(P, x, f)\in \mathcal{Q}$ as input and returns a query $\bm{q}$ together with a client state $\mathsf{st}$.
        \item $\bm{a}\leftarrow\mathsf{Answer}(\mathsf{DB}, \bm{q})$:
        Takes a database $\mathsf{DB}$ and a query $\bm{q}$ as input and returns an answer $\bm{a}$. 
        \item $\bm{r}\leftarrow\mathsf{Recover}(\mathsf{st}, \bm{a})$: 
        Takes an answer $\bm{a}$ together with a client state $\mathsf{st}$ as input and returns $\bm{r}$.
    \end{itemize}
\end{definition}

A PDQ scheme should satisfy the following correctness and privacy guarantees.

\begin{definition}[Correctness]
\label{def:PDQ_correctness}
    A PDQ scheme $(\mathsf{Query}, \mathsf{Answer}, \allowbreak \mathsf{Recover})$ is \emph{correct} if the following probability is overwhelming, for any $(P,x,f)\in\cQ$ and $\DB = (\DB_1, \cdots, \DB_N)\in \cD^N$, where $\bm{r}' = \left\{ 
    f(\DB_i) : 
    P(x, \DB_i)=1
    \right\}$.
    \[
    \Pr\left[ 
    \bm{r} = \bm{r}'
    ~\middle|~
    \begin{aligned}
        (\bm{q}, \mathsf{st}) &\leftarrow \mathsf{Query}(P, x, f)\\
        \bm{a} &\leftarrow \mathsf{Answer}(\bm{q}, \mathsf{DB})  \\
        \bm{r} &\leftarrow \mathsf{Recover}(\mathsf{st}, \bm{a})
    \end{aligned}
    \right]
    \]
\end{definition}

\begin{definition}[Privacy]
\label{def:PDQ_privacy}
    A PDQ scheme $(\mathsf{Query}, \mathsf{Answer}, \allowbreak \mathsf{Recover})$ is \emph{private} if the following holds for any $x_1$ and $x_2$, where $\compind$ denotes computational indistinguishability.
    \[
        \Query(P,x_1,f) \compind \Query(P,x_2,f)
    \]
\end{definition}

In some applications, we might also want \emph{server privacy}, i.e., clients should not learn more about the database than they have queried. 
We formalize this requirement as follows.

\begin{definition}[2-Sided Privacy]
\label{def:2-sided}
A correct and private PDQ scheme achieves \emph{$2$-sided privacy} if there exists a PPT simulator $\cS$ satisfying the following, where $\bm{r} = \left\{ 
    f(\DB_i) : 
    P(x, \DB_i)=1
    \right\}$.
\[
    \mathtt{view}_{\mathrm{C}}(P,x,f; \DB) \compind \cS(P,x,f,\bm{r})
\]
Here, $\mathtt{view}_{\mathrm{C}}(P,x,f; \DB)$ denotes the \emph{view} of the client when the PDQ protocol is run by a client with input $(P,x,f)$ and a server with $\DB=(\DB_1,\cdots,\DB_N)$.
\end{definition}

\begin{remark}
    Our framework captures various situations and types of queries (e.g., exact-match/range/wildcard queries and conjunction/disjunction of these).
    For example, consider the case of an exact-match query on a key-value database.
    In this case, the database $\DB$ consists of (key, value)-pairs $\DB_i = (k_i, v_i)\in \mathcal{D} = \cK \times \cV$.
    To securely retrieve all the $v_i$s such that $k_i = x \in \cK$, set the predicate $P$ as
    \[
        P(x, (k_i, v_i))) = \begin{cases}
            1 & \text{if } k_i=x\\
            0 & \text{if } k_i\neq x
        \end{cases}
    \]
    and the post-processing circuit $f$ as a projection
    \[
        f(k_i, v_i) = v_i.
    \]
\end{remark}

\subsection{PDQ from Homomorphic Encryption}
\label{ssec:pdq-ours}

\subsubsection{Generic Construction}
We can construct a PDQ scheme $\Pi=(\mathsf{Query}, \allowbreak\mathsf{Answer}, \allowbreak\mathsf{Recover})$ in a generic way from a homomorphic encryption scheme $\mathcal{E}= \allowbreak(\Gen, \allowbreak\Enc, \allowbreak\Dec,\allowbreak \Eval)$ and a homomorphic compression scheme $\cC = (\mathsf{Comp}, \allowbreak\mathsf{Decomp})$ for sparse vectors as follows.

\begin{itemize}
    \item $\mathsf{Query}(P, x, f)$: 
    The client outputs $\bm{q} \gets (P, \Enc(x), f)$. 
    \smallskip
    \item $\mathsf{Answer}(\mathsf{DB}, \bm{q})$: Let $\mathsf{DB}=(\DB_1, \cdots , \DB_N)$. 
    \begin{enumerate}
        \item $\Match$: 
        Identify relevant data points by homomorphic evaluation of the predicate $P$,
    \[
        \mathtt{ctxt}(v_i) \leftarrow \Eval\Big(P(\;\cdot\;, \mathsf{DB}_i),\ctxt(x)\Big).
    \]
        \item $\Mask$: 
        Process data points and zero-ize irrelevant data points.
        This is done by homomorphically multiplying the \emph{mask} $\ctxt(v_i)$ to processed data points,
    \begin{align*}
        \ctxt(d_i) &\gets \ctxt(v_i) \boxtimes f(\DB_i).
    \end{align*}
        \item $\Comp$: 
        For efficiency, compress $\ctxt(\bm{d})$ via $\cC=(\Comp, \Decomp)$, then output the result,
        \[
            \bm{a} \gets \mathsf{Comp}(\ctxt(\bm{d})). 
        \]
    \end{enumerate}
    \item $\mathsf{Recover}(\mathsf{st}, \bm{a})$: 
    The client recovers the desired result by decompressing $\bm{a}$.
    \[
        \bm{r} \gets \mathsf{Decomp}(\bm{a}).
    \]
\end{itemize}

\begin{theorem}[Correctness]
    The proposed generic construction $\Pi=(\mathsf{Query}, \allowbreak\mathsf{Answer}, \allowbreak\mathsf{Recover})$ is a correct PDQ scheme if the underlying HE scheme $\cE$ and homomorphic compression scheme $\cC$ are both correct.
\end{theorem}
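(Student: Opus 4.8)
The plan is to unfold the correctness definition (Definition~\ref{def:PDQ_correctness}) and chain the correctness guarantees of $\mathcal{E}$ and $\mathcal{C}$ along one execution of $\Pi$. Fix $(P,x,f)\in\cQ$ and $\DB=(\DB_1,\dots,\DB_N)\in\cD^N$, and run the protocol. After $\mathsf{Query}$, the query carries $\ctxt(x)=\Enc(x)$, a genuine encryption of $x$, together with $\mathsf{st}$. First I would argue that, with overwhelming probability, the ciphertext vector produced by $\Match$ and $\Mask$ encrypts the intended sparse vector: by correctness of $\Eval$, each $\ctxt(v_i)=\Eval(P(\cdot,\DB_i),\ctxt(x))$ decrypts to $v_i=P(x,\DB_i)\in\{0,1\}$, and by correctness of homomorphic multiplication $\boxtimes$, each $\ctxt(d_i)=\ctxt(v_i)\boxtimes f(\DB_i)$ decrypts to $d_i=v_i\cdot f(\DB_i)$. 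Hence $d_i=f(\DB_i)$ when $P(x,\DB_i)=1$ and $d_i=0$ otherwise, so the vector $\bm d=(d_1,\dots,d_N)$ is $s$-sparse whenever the query matches at most $s$ records (the regime in which the scheme is instantiated); in particular $\bm d\in\Omega$.

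Next I would invoke the correctness of the homomorphic compression scheme. Since $\bm a=\mathsf{Comp}(\ctxt(\bm d))$ with $\bm d\in\Omega$, the defining property of $(\mathsf{Comp},\mathsf{Decomp})$ yields $\bm r=\mathsf{Recover}(\mathsf{st},\bm a)=\mathsf{Decomp}(\mathsf{Comp}(\ctxt(\bm d)))=\bm d$. Reading off the nonzero entries of $\bm d$ then recovers exactly $\{f(\DB_i):P(x,\DB_i)=1\}=\bm r'$; here one either assumes $f(\DB_i)\neq 0$ on matching records (e.g., via a fixed nonzero encoding) so that the index set of $\bm d$ coincides with the set of matching indices, or observes that $\bm d$ already carries the required information positionally. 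Either way, $\bm r=\bm r'$ holds on this execution.

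The only quantitative point to handle with care is the accumulation of failure probabilities: the per-operation correctness of $\mathcal{E}$ fails with negligible probability, $\Match$ and $\Mask$ invoke $O(N)$ such operations, and $\Comp$/$\Decomp$ contribute the (also overwhelming) guarantee of $\mathcal{C}$; a union bound over these polynomially many events keeps the overall success probability overwhelming, as Definition~\ref{def:PDQ_correctness} requires. I do not expect a real obstacle — the argument is a direct composition of the two correctness hypotheses — the only things worth flagging explicitly are the implicit sparsity bound $s$ on the number of matching records (without which $\bm d\notin\Omega$ and $\mathsf{Decomp}$ gives no guarantee) and the mild assumption that $f$ does not send matching records to $0$.
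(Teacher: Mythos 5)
Your proposal follows essentially the same route as the paper's proof: first use correctness of $\cE$ to show that after $\Match$ and $\Mask$ the ciphertext vector encrypts $d_i = f(\DB_i)$ when $P(x,\DB_i)=1$ and $d_i=0$ otherwise, then invoke correctness of $\cC$ to conclude $\Decomp(\Comp(\ctxt(\bm{d})))=\bm{d}$. The extra points you flag --- the implicit $s$-sparsity bound needed for $\bm{d}\in\Omega$, the assumption that $f$ does not map matching records to $0$, and the union bound over negligibly-failing HE operations --- are reasonable care that the paper's terser proof leaves implicit, but they do not change the argument.
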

\begin{proof}
    First, note that $\ctxt(\bm{d})$, the result of $\Mask$ step, encrypts a sparse vector of $f(\DB_i)$ such that $P(x, \DB_i)=1$ by correctness of $\cE$.
    \begin{align*}
        \ctxt(d_i) &= \ctxt(v_i) \boxtimes f(\DB_i)\\
        &= \Eval\Big(P(\;\cdot\;, \mathsf{DB}_i),\ctxt(x)\Big) \boxtimes f(\DB_i)\\
        &= \ctxt(P(x, \DB_i)) \boxtimes f(\DB_i)\\
        &= \begin{cases}
            \ctxt(f(\DB_i)), &\text{if } P(x, \DB_i)=1\\
            \ctxt(0), &\text{if } P(x, \DB_i)=0.
        \end{cases}
    \end{align*}
    Then, the theorem follows from the correctness of $\cC$.
    \[
        \Decomp(\bm{a}) = \mathsf{Decomp}(\mathsf{Comp}(\ctxt(\bm{d}))) = \bm{d}
    \]
\end{proof}

\begin{theorem}[Privacy]
    The proposed generic construction $\Pi=(\mathsf{Query}, \allowbreak \mathsf{Answer}, \mathsf{Recover})$ is a private PDQ scheme if the underlying HE scheme $\cE$ has semantic security. 
\end{theorem}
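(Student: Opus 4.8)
The plan is to reduce privacy of the PDQ protocol directly to the semantic security of the underlying HE scheme $\cE$, since the only place the client's condition $x$ enters the query $\bm{q} = (P, \Enc(x), f)$ in an encrypted form, while $P$ and $f$ are fixed and identical in both worlds. Concretely, I would show that for any two conditions $x_1, x_2$, the distributions $\Query(P, x_1, f)$ and $\Query(P, x_2, f)$ are computationally indistinguishable by exhibiting a reduction: any PPT distinguisher $\mathcal{A}$ that tells these apart with non-negligible advantage can be turned into a PPT adversary $\mathcal{B}$ against the IND-CPA (semantic security) game of $\cE$.

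The key steps, in order, are as follows. First, observe that $\Query(P, x_i, f)$ outputs a triple whose first and third coordinates ($P$ and $f$) are independent of $i$ and can be produced by $\mathcal{B}$ itself; only the middle coordinate $\Enc_{\mathtt{pk}}(x_i)$ depends on $i$. Second, construct $\mathcal{B}$: on receiving the public key $\mathtt{pk}$ from the semantic-security challenger, $\mathcal{B}$ submits $(x_1, x_2)$ as its challenge messages, receives $\ctxt^* = \Enc_{\mathtt{pk}}(x_b)$ for an unknown bit $b$, forms $\bm{q}^* = (P, \ctxt^*, f)$, and runs $\mathcal{A}(\bm{q}^*)$, outputting whatever $\mathcal{A}$ outputs. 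Third, note that when $b = 0$ the view of $\mathcal{A}$ is exactly $\Query(P, x_1, f)$ and when $b = 1$ it is exactly $\Query(P, x_2, f)$; hence $\mathcal{B}$'s advantage in the semantic-security game equals $\mathcal{A}$'s advantage in distinguishing the two query distributions. Since $\cE$ is semantically secure, this advantage is negligible, which establishes $\Query(P, x_1, f) \compind \Query(P, x_2, f)$, i.e., Definition~\ref{def:PDQ_privacy}.

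One subtlety worth addressing is the quantifier structure: Definition~\ref{def:PDQ_privacy} asks for indistinguishability for \emph{any} $x_1, x_2$, so the reduction should be stated for arbitrary fixed (or adversarially chosen) $x_1, x_2$; this matches the standard IND-CPA formulation where the adversary picks the challenge pair, so no hybrid argument is needed — a single invocation of semantic security suffices. The main obstacle, such as it is, is mostly bookkeeping: making sure the reduction faithfully simulates the honest $\Query$ algorithm, i.e., that $\mathcal{B}$ uses the challenger's public key (rather than generating its own) and that $P, f$ are forwarded unchanged, so that the simulated query is distributed \emph{identically} to a real one conditioned on the challenge bit. I would conclude by remarking that this argument only uses semantic security of $\cE$ and is independent of the homomorphic compression scheme $\cC$, since $\cC$ plays no role in $\Query$.
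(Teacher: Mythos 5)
Your proposal is correct and is essentially the same argument the paper intends: the paper's proof is the one-line statement that privacy ``directly follows from the semantic security of $\cE$,'' and your reduction (the query $\bm{q}=(P,\Enc(x),f)$ differs across $x_1,x_2$ only in the ciphertext, so a distinguisher yields an IND-CPA adversary) is exactly the standard argument being elided there. Your added care about the quantifiers, the use of the challenger's public key, and the irrelevance of $\cC$ is sound bookkeeping, not a different route.
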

\begin{proof}
    Directly follows from the semantic security of $\cE$.
\end{proof}

\begin{remark}[2-Sided Privacy]
    To achieve 2-sided privacy, we can apply standard techniques such as \emph{noise-flooding}~\cite{Gen09_thesis} or \emph{sanitization}~\cite{EC:DucSte16} to $\Answer$ step.
\end{remark}

\subsubsection{PDQ with Our Homomorphic Compression Scheme}
When applying our homomorphic compression scheme (Section~\ref{sec:homcomp}) to PDQ, we can further optimize the costs from the above generic construction. 
Recall that our compression algorithm involves homomorphically computing the index vector from the sparse vector. 
However, in the context of PDQ, we already have the corresponding index vector for the sparse vector (or masked database) from $\mathsf{Match}$. 
The punchline is that, in natural PDQ protocols, the sparse vector is actually obtained \emph{from} the index vector through masking ($\mathsf{Mask}$). 
Thus, we can skip the homomorphic computation of the index vector and reuse the result of the $\mathsf{Match}$ step.
We believe that this trick is also applicable to other scenarios where homomorphic compression is needed and that our compression scheme will show great performance in general.
The modified $\Answer$ algorithm is described in Alg.~\ref{alg:New_Answer}.

\begin{algorithm}[ht]
\caption{$\mathsf{Answer}$}\label{alg:New_Answer}
\begin{algorithmic}
\smallskip
\State \textbf{Input:} $\mathsf{DB}=(\DB_1, \cdots, \DB_N)$, $P$, $\ctxt(x)$, $f$
\State \textbf{Step 1}: $\mathsf{Match}$
\Indent
\For{$i\in [N]$}
\State $\ctxt(v_i) \gets \Eval\Big(P(\;\cdot\;, \mathsf{DB}_i),\ctxt(x)\Big)$
\EndFor
\State $\ctxt(\bm{v}) \gets (\ctxt(v_1), \cdots , \ctxt(v_N))$
\EndIndent
\State \textbf{Step 2}: $\Mask$
\Indent
\For{$i\in [N]$}
\State $\ctxt(d_i) \gets \ctxt(v_i) \boxtimes f(\DB_i)$ 
\EndFor
\State $\ctxt(\bm{d}) \leftarrow (\ctxt(d_1), \cdots, \ctxt(d_N))$
\EndIndent
\State \textbf{Step 3}: $\Comp$
\Indent
\State $\ctxt(\bm{e})\gets \Eval (\bm{C}, \ctxt(\bm{d}))$
\State $\ctxt(\bm{w})\gets \mathsf{CompIdx}(\ctxt(\bm{v}))$
\EndIndent
\smallskip
\State \textbf{return} $\ctxt(\bm{e}), \ctxt(\bm{w})$
\end{algorithmic}
\end{algorithm}

\begin{remark}[Communication Cost]
    Our PDQ scheme enjoys asymptotically optimal communication cost thanks to the asymptotically optimal compression rate of our homomorphic compression scheme (Section~\ref{sec:homcomp}): 
    It only requires $O(s)$ ciphertexts to communicate.
\end{remark}

\begin{remark}[SIMD-Friendliness]
    We note that our PDQ scheme has a SIMD-friendly structure. 
    This is because our compression scheme is SIMD-friendly (Remark~\ref{rem:SIMD}) and the rest of the part is trivially parallelizable.
    Refer to Section~\ref{sec:implementation} for more detailed discussions.
\end{remark}

\section{SIMD-Aware Implementation}\label{sec:implementation}
In this section, we illustrate how one can efficiently implement PDQ (Section~\ref{sec:PDQ}) with our homomorphic compression scheme (Section~\ref{sec:homcomp}), fully leveraging the SIMD structure of HE schemes.
We provide details for the SIMD-aware implementation of our homomorphic compression scheme (i.e., $\sf Comp$ step in PDQ), which is the main focus of this work. 
The rest (i.e., $\sf Match$ and $\Mask$ steps) should be relatively straightforward to leverage the SIMD structure.

In Section~\ref{subsec:BGV}, we review the BGV scheme~\cite{ITCS:BraGenVai12}, the HE scheme we use in the implementation.
In Section~\ref{subsec:MatMul}, we describe how to implement matrix-vector multiplication, which is the main step of our compression scheme, fully utilizing the SIMD structure of the BGV scheme. 
In Section~\ref{subsec:RingSwitching}, we introduce \emph{ring-switching}~\cite{SCN:GHPS12, GHPS13}, which allows us to use smaller HE parameters in the $\mathsf{Comp}$ step.

\subsection{BGV Scheme}\label{subsec:BGV}

We adopt the BGV scheme~\cite{ITCS:BraGenVai12}, which supports homomorphic SIMD arithmetic in $\Z_p$ for a prime $p$ needed in our homomorphic compression scheme.
In particular, we consider the BGV scheme with the ring dimension $n$ that is a power of $2$ and the plaintext modulus $p$ satisfying $p=1\pmod{2n}$.
Then, the plaintext space is $\mathcal{R}_p := \Z_p [X] / (X^{n}+1)\cong \Z_p^{2\times(n/2)}$, and the ciphertext space is $\mathcal{R}_q^2$, where $\mathcal{R}_q:=\Z_q [X] / (X^{n}+1)$ for a ciphertext modulus $q$.
Let $\bm{m}\in \Z_p^{2\times(n/2)}$ be a plaintext and denote its rows and entries as follows.
We often call each entry of a plaintext as a \emph{slot}.
\[
    \bm{m}= 
    \begin{bmatrix}
        \bm{m}_1^T\\
        \bm{m}_2^T
    \end{bmatrix}
    =
    \begin{bmatrix}
        m_{1,1} & \cdots & m_{1, n/2}\\
        m_{2,1} & \cdots & m_{2, n/2}
    \end{bmatrix}.
\]

The BGV scheme supports (1) SIMD arithmetic in $\Z_p$ and (2) homomorphic rotations (horizontal and vertical). 
That is, we can homomorphically add and multiply a ciphertext encrypting $\bm{v}\in \Z_p^{2\times(n/2)}$ with a ciphertext (or plaintext) of $\bm{w}\in \Z_p^{2\times(n/2)}$ to obtain a ciphertext of $\bm{v}+\bm{w}$ and $\bm{v}\odot\bm{w}$, respectively.
For homomorphic rotations, to be precise, let $\mathtt{Rot}( \;\cdot\; ; r)$ denote the rotation operation which maps $[v_1, \cdots , v_{n/2}]\allowbreak \mapsto [v_{1+r},\cdots, v_{n/2}, v_{1}, \cdots , v_{r}]$.
Then, the BGV scheme supports homomorphic rotations computing $\mathtt{Rot}_{\text{row}}$ and $\mathtt{Rot}_{\text{col}}$, defined as follows.

\begin{align*}
    \mathtt{Rot}_{\text{row}}\left( 
    \begin{bmatrix}
        \bm{m}_1^T\\
        \bm{m}_2^T
    \end{bmatrix}
    ; r \right)
    &=\begin{bmatrix}
        \mathtt{Rot}(\bm{m}_1^T; r)\\
        \mathtt{Rot}(\bm{m}_2^T; r)
    \end{bmatrix},\\
    \mathtt{Rot}_{\text{col}}\left( 
    \begin{bmatrix}
        \bm{m}_1^T\\
        \bm{m}_2^T
    \end{bmatrix}
    ; 1 \right)
    &=\begin{bmatrix}
        \bm{m}_2^T\\
        \bm{m}_1^T
    \end{bmatrix}.
\end{align*}

\subsection{Matrix Multiplication}\label{subsec:MatMul}

Our homomorphic compression algorithm (Alg.~\ref{alg:New_Answer}) is essentially two homomorphic matrix-vector multiplications: $\bm{C} \bm{v}$ and $\bm{C} \bm{d}$. 
Among several works \cite{C:HalSho14, USENIX:JuvVaiCha18, SP:LHHMQ21} on homomorphic matrix-vector multiplications, we follow the \textsc{Pegasus} \cite{SP:LHHMQ21} framework, which targets \emph{short} matrices as our case ($\bm{C}\in\Z_p^{s\times N}$ with $s \ll N$). 
While the original \textsc{Pegasus} is designed for the CKKS scheme~\cite{AC:CKKS17}, which has a plaintext space $\C^{n/2}$, we adopt \textsc{Pegasus} for simultaneous evaluation of two matrix-vector multiplications over the BGV scheme with plaintext space $\Z_p^{2\times(n/2)}$.

We first describe the case of $N = n/2$. 
Consider a BGV ciphertext encrypting $\bm{u} := [\bm{v}, \bm{d}]^T \in\Z_p^{2\times(n/2)}$. 
To obtain a ciphertext of $[\bm{C}\bm{v}, \bm{C}\bm{d}]^T$, we essentially follow the \textsc{Pegasus} algorithm. 
The differences are (1) encoding $\bm{C}$ twice to both rows in plaintexts and (2) applying homomorphic $\mathtt{Rot}_{\text{row}}$ whenever homomorphic rotation is called.

For the case of $N > n/2$, we horizontally split the matrix $\bm{C}$ into $[\bm{C}_1,\cdots,\bm{C}_k]$ 
and $\bm{u} = [\bm{u}_1,\cdots,\bm{u}_k]$ accordingly so that $\bm{C}_i\bm{u}_i^T$ can be computed as in the case of $N=n/2$.
Then, we sum them up to obtain a ciphertext of $[\bm{C}\bm{v}, \bm{C}\bm{d}] = \bm{C}\bm{u}^T = \sum\bm{C}_i\bm{u}_i^T$.
For optimization, we can perform the \emph{rotate-and-sum} step of $\textsc{Pegasus}$ (lines $7$--$9$ in~\cite[Fig. 5]{SP:LHHMQ21}) after the summation, significantly reducing the number of rotate-and-sum operation to $1$ from $\approx 2N/n$.

\begin{remark}
One might wonder why and how we are using ciphertexts of $\bm{u}=[\bm{v}, \bm{d}]^T$ in our homomorphic compression. 

\smallskip
\noindent\quad\textbf{(i) Why?} It might seem more natural to compute $\bm{C}\bm{v}$ and $\bm{C}\bm{d}$ separately.
However, this method outputs \emph{two} ciphertexts, whereas ours outputs only \emph{one} ciphertext. 
Therefore, using the naive method instead of ours will double the communication cost or require the server to schedule an extra multiplicative level to merge two ciphertexts into one.

\smallskip
\noindent\quad\textbf{(ii) How?} Right after the $\mathsf{Mask}$ step, we can easily generate ciphertexts of $\bm{u}$ from ciphertexts of $\bm{v}$ and $\bm{d}$.
Here, we only describe the case of $N=n$. 
The extension is straightforward.
Identify $\bm{v},\bm{d} \in \Z_p^n$ as $[\bm{v}_1$, $\bm{v}_2]^T$ and $[\bm{d}_1, \bm{d}_2]^T$ in $\Z_p^{2\times (n/2)}$, respectively.
Then, we can compute ciphertexts of $[\bm{v_1},\bm{0}]^T$ as $\bm{v} \odot [\bm{1},\bm{0}]^T$.\footnote{We use notations $\bm{1}=(1, \cdots, 1)$ and $\bm{0}=(0,\cdots,0)$.}
Similarly, we can get ciphertexts of $[\bm{0},\bm{v_2}]^T$, $[\bm{d_1},\bm{0}]^T$, and $[\bm{0},\bm{d_2}]^T$.
Putting these together, we can compute ciphertexts of $\bm{u}$ ($\bm{u}_1$ and $\bm{u}_2$), by evaluating the following homomorphically.
\[
    \bm{u_1}
    = [\bm{v_1}, \bm{d_1}]^T
    = [\bm{v_1}, \bm{0}]^T + \mathtt{Rot}_{\mathrm{col}}([\bm{d_1},\bm{0}]^T; 1)
\]
\[
    \bm{u_2} 
    =[\bm{v_2}, \bm{d_2}]^T
    = \mathtt{Rot}_{\mathrm{col}}([\bm{0},\bm{v_2}]^T; 1)+[\bm{0},\bm{d_2}]^T
\]
\end{remark}

\begin{remark}[Optimization for Unencrypted Database]
Our scheme can be further optimized when the database is not encrypted.
Note that $\bm{C}\cdot(\mathsf{DB}\odot\bm{v}) = (\bm{C}\cdot\mathrm{diag}(\mathsf{DB}))\cdot \bm{v}$, where $\text{diag}(\mathsf{DB})$ is a $N\times N$ diagonal matrix with $\mathsf{DB}=(\mathsf{DB}_1,\cdots, \mathsf{DB}_N)$ as its diagonal entries.
Thus, when $\mathsf{DB}$ is known in clear text, the server can precompute $\bm{D}=\bm{C}\cdot\text{diag}(\mathsf{DB})$ and just compute $[\bm{C}^T, \bm{D}^T]^T \cdot \bm{v}$ homomorphically when compressing.
This approach not only eliminates costs for the $\mathsf{Mask}$ step but also reduces costs for the $\mathsf{Comp}$ step.
Note that, with $\textsc{Pegasus}$, computing $[\bm{C}^T, \bm{D}^T]^T \cdot \bm{v}$ together is cheaper than computing $\bm{C}\bm{v}$ and $\bm{D}\bm{v}$ separately.
Meanwhile, we do not apply this optimization in our experiments (Section~\ref{sec:experiment}) for generality. 
\end{remark}

\subsection{Ring-Switching}\label{subsec:RingSwitching}

To allow reasonable queries in the $\mathsf{Match}$ step, we must use large HE parameters (e.g., $n= 2^{16}$) supporting \emph{deep} homomorphic computations. 
(See Section~\ref{ssec:contribution}.)
The large parameters seem even more inevitable when we consider PDQ with 2-sided privacy (Def.~\ref{def:2-sided}), as we will need \emph{noise-flooding}~\cite{Gen09_thesis} or \emph{sanitization}~\cite{EC:DucSte16}.

On the other hand, our homomorphic compression scheme compresses the masked database into a single ciphertext unless $s \le n$.
That is, when $s<n$, we are wasting message slots while the communication cost stays the same.
In addition, large HE parameters substantially slow down the overall PDQ protocol, or the $\mathsf{Comp}$ step in particular, since the performance of the HE scheme degrades super-linearly with $n$. 

However, the previous works~\cite{CCS:CDGLY21, EC:FleLarSim23, C:LiuTro22} neglected these problems. 
It was unclear how to handle large-depth computations in the $\mathsf{Match}$ step while keeping the HE parameters small. 

To this end, we adopt the \emph{ring-switching} technique~\cite{SCN:GHPS12, GHPS13} to switch to a smaller HE parameter after the $\mathsf{Match}$ and $\mathsf{Mask}$ step.
Since our homomorphic compression scheme requires only one multiplicative depth, we can switch to a much smaller parameter set ($n=2^{13}$), reducing the communication cost and accelerating the $\mathsf{Comp}$ step significantly.
We remark that ring-switching is quite cheap, where a single key-switching dominates the cost of the full algorithm.

\section{Experimental Result}\label{sec:experiment}

In this section, we report experimental results of a proof-of-concept SIMD-aware implementation\footnote{The code is available on \url{https://github.com/jaihyunp/HomCompress}.} for $\Comp$ step of PDQ using our homomorphic compression scheme (Section~\ref{sec:implementation}).

\subsection{Environments and Parameters}\label{subsec:Environments}

We use HElib library~\cite{HElib} to implement our homomorphic compression scheme and, additionally, SageMath for the decompression method.
All experiments are conducted on Intel® Xeon® Silver 4114 CPU at 2.20GHz processor with $264$GB of memory. 
We use a single thread for all experiments.
Our implementation is based on BGV scheme with ring degree $n=2^{13}$, plaintext modulus $p=65537$, and ciphertext moduli of the product of two $54$-bit primes. 
Our parameter set is $143$-bit secure based on the HElib security estimator.

\subsection{Results}\label{subsec:Results}

\subsubsection*{Runtime under varying $s$}

Table~\ref{tab:res-sparsity} reports the results of compressing and decompressing a fixed number of data ($N=16384$) with varying sparsity ($s=8,16,32,64,128$). 
According to the table, compression time increases sublinearly with the sparsity. 
This matches with our approach of adopting the BSGS approach~\cite{SP:LHHMQ21}, which requires $O(\sqrt{s})$ key-switchings during the compression.

According to Theorem~\ref{thm:Comp}, our decompression time increases superlinearly. 
However, we remark that the decompression time is actually dominated by decryption and decoding time for BGV ciphertexts, which was about $0.95$ seconds. 

\subsubsection*{Runtime under varying $N$}

Table~\ref{tab:res-dbsize} reports the results of compressing and decompressing databases with varying sizes ($N = 2^{13}, 2^{14}, 2^{15}, 2^{16}, 2^{17}$) with a fixed sparsity ($s=16$). 
As expected, the compression time tends to increase linearly to $N$.
The decompression time is constant since the procedure does not depend on the number of the original data.

\subsubsection*{Communication Cost}
In our experiments, we compress HE ciphertexts into a single ciphertext of $110.6$KB\footnote{Unless the sparsity $s$ is larger than half the number of slots in a ciphertext (which is $4096$ in our parameter), we can compress into a single ciphertext.}, using two rotation keys of total $442.4\times 2= 884.8$KB.
We note that sending the keys is a one-time cost in the setup phase and can be amortized across queries.

\subsubsection*{Comparison}

Experimental comparison between our scheme and~\cite{CCS:CDGLY21}\footnote{
All numbers for~\cite{CCS:CDGLY21} are borrowed from the paper. 
In~\cite{CCS:CDGLY21}, all experiments were performed on an Intel®Core 9900k @4.7GHz with 64GB of memory, and the networking protocol between server and clients is a 1Gbps LAN.} is given in Fig.~\ref{fig:graph_fixed_N} and \ref{fig:graph_fixed_s}.
Ours outperforms the previous best algorithms~\cite{CCS:CDGLY21}. 
Compared with their homomorphic compression scheme, BFS-CODE, our scheme shows $19.7$x--$33.2$x speedup.

Choi et al.~\cite{CCS:CDGLY21} also proposed compression schemes for \emph{index vectors}, namely BF-COIE and PS-COIE, and to use them with private information retrieval schemes (PIR) to retrieve the data after decompressing the matching indices (See Section~\ref{ssec:related}). 
Compared with BF-COIE and PS-COIE together with PIR, our scheme shows $3.3$x--$9.3$x speedup.
We note that this approach requires more rounds and communication costs since it uses extra PIR protocol. 
Moreover, this approach leaves the possibility of the client acting maliciously and retrieving data different from the decompressed indices.

\newcolumntype{P}[1]{>{\centering\arraybackslash}p{#1}}

\begin{table}
\centering
\caption{Experimental result on the fixed database size $N=16384$. 
}\label{tab:res-sparsity}
\begin{tabular}{c|ccc}
\toprule
Sparsity & $\mathsf{Comp}$ (sec.) & $\mathsf{Decomp}$ (sec.) & Size (KB)\\
\midrule
~~$8$   & ~1.17 & 0.94 & 110.6\\
~$16$  & ~2.06 & 0.98 & 110.6\\
~$32$  & ~3.80 & 1.01 & 110.6\\
~$64$  & ~7.06 & 1.07 & 110.6\\
$128$ & 13.40 & 1.37 & 110.6\\
\bottomrule
\end{tabular}
\end{table}

\begin{table}
\centering
\caption{Experimental result on the fixed sparsity $s=16$. 
}\label{tab:res-dbsize}
\begin{tabular}{c|rcc}
\toprule
\#(data) & $\mathsf{Comp}$ (sec.) & $\mathsf{Decomp}$ (sec.) & Size (KB)\\
\midrule
$2^{13}$ & ~$1.09$~~~~ & 0.98 & 110.6\\
$2^{14}$ & ~$2.06$~~~~ & 0.98 & 110.6\\
$2^{15}$ & ~$4.01$~~~~ & 0.98 & 110.6\\
$2^{16}$ & ~$7.92$~~~~ & 0.98 & 110.6\\
$2^{17}$ & $15.75$~~~~ & 0.98 & 110.6\\
\bottomrule
\end{tabular}
\end{table}

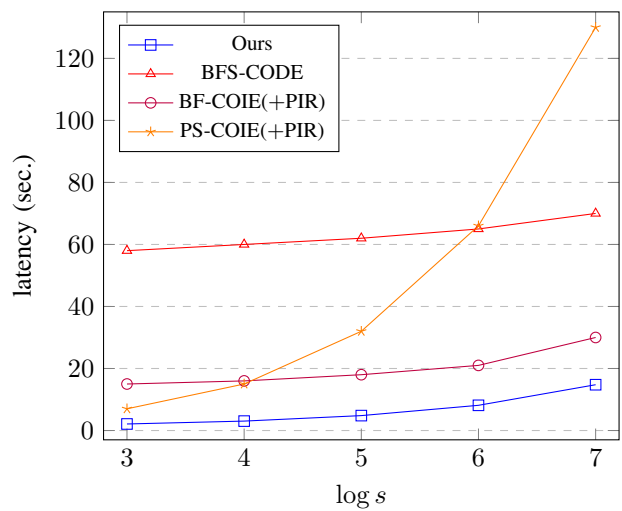
\begin{figure}[ht]
\begin{tikzpicture}
\begin{axis}[
    xlabel={$\log s$},
    ylabel={latency (sec.)},
    xmin=2.8, xmax=7.2,
    ymin=-3, ymax=135,
    xtick={3, 4, 5, 6, 7},
    ytick={0, 20, 40, 60, 80, 100, 120},
    legend pos=north west,
    ymajorgrids=true,
    grid style=dashed,
    legend style={font=\footnotesize},
]

\addplot[
    color=blue, %
    mark=square,
    ]
    coordinates {
    (3, 2.11)
    (4, 3.04)
    (5, 4.81)
    (6, 8.13)
    (7, 14.77)
    };

\addplot[
    color=red, %
    mark=triangle,
    ]
    coordinates {
    (3, 58)
    (4, 60)
    (5, 62)
    (6, 65)
    (7, 70)
    };

\addplot[
    color=purple, %
    mark=o,
    ]
    coordinates {
    (3, 15)
    (4, 16)
    (5, 18)
    (6, 21)
    (7, 30)
    };

\addplot[
    color=orange, %
    mark=star,
    ]
    coordinates {
    (3, 7)
    (4, 15)
    (5, 32)
    (6, 66)
    (7, 130)
    };
    \legend{Ours, BFS-CODE, BF-COIE($+$PIR), PS-COIE($+$PIR)}
    
\end{axis}
\end{tikzpicture}
\caption{Comparison of homomorphic compression schemes on the fixed database size $N=16384$ and varying sparsity $s$.
}
\label{fig:graph_fixed_N}
\end{figure}

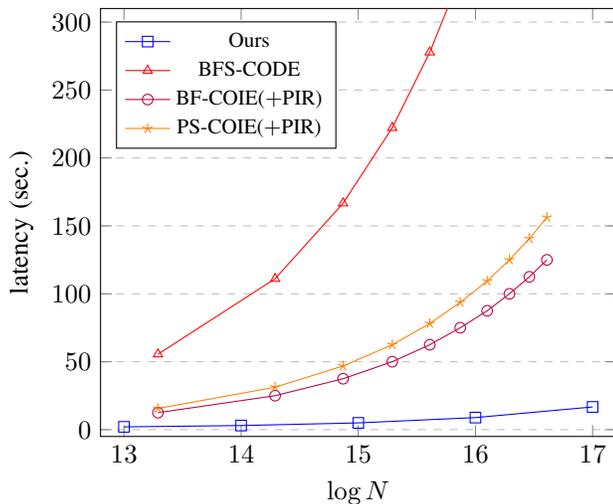
\begin{figure}[ht]
\begin{tikzpicture}
\begin{axis}[
    xlabel={$\log N$},
    ylabel={latency (sec.)},
    xmin=12.8, xmax=17.2,
    ymin=-5, ymax=310,
    xtick={13, 14, 15, 16, 17},
    ytick={0, 50, 100, 150, 200, 250, 300},
    legend pos=north west,
    ymajorgrids=true,
    grid style=dashed,
    legend style={font=\footnotesize},
]

\addplot[
    color=blue, %
    mark=square,
    ]
    coordinates {
    (13, 2.07)
    (14, 3.04)
    (15, 4.99)
    (16, 8.90)
    (17, 16.73)
    };

\addplot[
    color=red, %
    mark=triangle,
    ]
    coordinates {
    (13.29, 55.56)
    (14.29, 111.11)
    (14.87, 166.67)
    (15.29, 222.22)
    (15.61, 277.78) %
    (15.87, 333.33)
    (16.10, 388.89)
    (16.29, 444.44)
    (16.46, 500)
    (16.61, 555.56) %
    };

\addplot[
    color=purple, %
    mark=o,
    ]
    coordinates {
    (13.29, 12.50)
    (14.29, 25.00)
    (14.87, 37.50)
    (15.29, 50)
    (15.61, 62.50) %
    (15.87, 75.00)
    (16.10, 87.50)
    (16.29, 100.00)
    (16.46, 112.50)
    (16.61, 125) %
    };

\addplot[
    color=orange, %
    mark=star,
    ]
    coordinates {
    (13.29, 15.62)
    (14.29, 31.25)
    (14.87, 46.88)
    (15.29, 62.50)
    (15.61, 78.13) %
    (15.87, 93.75)
    (16.10, 109.38)
    (16.29, 125)
    (16.46, 140.63)
    (16.61, 156.25) %
    };
    \legend{Ours, BFS-CODE, BF-COIE($+$PIR), PS-COIE($+$PIR)}
    
\end{axis}
\end{tikzpicture}
\caption{Comparison of homomorphic compression schemes on the fixed sparsity $s=16$ and varying database size $N$.
}
\label{fig:graph_fixed_s}
\end{figure}

\bibliographystyle{IEEEtran}
\bibliography{bib/abbrev3,bib/crypto,bib/add}

\end{document}